\documentclass[superscriptaddress,aps,pra,reprint,longbibliography]{revtex4-1}
\usepackage{etex}
\usepackage{graphicx}
\usepackage{caption}
\usepackage{subcaption}
\usepackage{mathrsfs}
\usepackage{amsfonts}
\usepackage{dsfont}
\usepackage{times}
\usepackage{amsmath}
\usepackage{amsthm}
\usepackage{amssymb}
\usepackage{leftidx}
\usepackage{tikz}
\usepackage{tikz-network}
\usepackage{color}
\usepackage[all]{xy}
\usepackage[ bookmarks=true, colorlinks, linkcolor=blue, urlcolor=blue, citecolor=blue, plainpages=false, pdfpagelabels, final, breaklinks=true ]{hyperref}

\usepackage{mathtools}
\usepackage{array}
\usepackage{multirow}
\usepackage{bbold}
\usepackage{ragged2e}%

\newcommand*{\cl}[1]{{\mathcal{#1}}}
\newcommand*{\bb}[1]{{\mathbb{#1}}}

\newcommand*{\tn}[1]{{\textnormal{#1}}}
\newcommand{\T}{\mbox{$\textnormal{Tr}$}}

\newcommand*{\eps}{\varepsilon}
\newcommand{\pr}{\mathrm{Pr}}
\newcommand{\vol}{\mathrm{vol}}
\newcommand{\area}{\mathrm{area}}

\newcommand{\xto}{\xrightarrow}

\newcolumntype{L}[1]{>{\raggedright\let\newline\\\arraybackslash\hspace{0pt}}m{#1}}
\newcolumntype{C}[1]{>{\centering\let\newline\\\arraybackslash\hspace{0pt}}m{#1}}
\newcolumntype{R}[1]{>{\raggedleft\let\newline\\\arraybackslash\hspace{0pt}}m{#1}}

\theoremstyle{plain}
\newtheorem{theorem}{Theorem}[section]
\newtheorem{prop}[theorem]{Proposition}
\newtheorem{lemma}[theorem]{Lemma}
\newtheorem{cor}[theorem]{Corollary}

\newtheorem{theoremintro}{Theorem}
\newtheorem{corollaryintro}{Corollary}
\newtheorem*{conj}{Conjecture}

\theoremstyle{definition}

\theoremstyle{remark}
\newtheorem{remark}[theorem]{Remark}

\DeclareRobustCommand{\orcidicon}{%
	\begin{tikzpicture}
	\draw[lime, fill=lime] (0,0) 
	circle [radius=0.16] 
	node[white] {{\fontfamily{qag}\selectfont \tiny ID}};
	\draw[white, fill=white] (-0.0625,0.095) 
	circle [radius=0.007];
	\end{tikzpicture}
	\hspace{-2mm}
}
\foreach \x in {A, ..., Z}{%
	\expandafter\xdef\csname orcid\x\endcsname{\noexpand\href{https://orcid.org/\csname orcidauthor\x\endcsname}{\noexpand\orcidicon}}
}

\begin{document}
\title{Concentration of Measure Phenomena for Quantum States on a Higher Dimensional Equator}
\author{Kabgyun Jeong\orcidA{}}
\email{kgjeong6@snu.ac.kr}
\affiliation{Institute of Computer Technology, College of Engineering, Seoul National University, Seoul 08826, Korea}
\affiliation{Team QST, Seoul National University, Seoul 08826, Korea}

\date{\today}

\begin{abstract}
We revisit L\'{e}vy's lemma, a widely used analytical tool in quantum information theory. Concentration inequalities quantify the phenomenon in which Lipschitz observables concentrate around a median or mean, and serve as fundamental analytical tools across information theory, statistical physics, and learning theory. In particular, L\'{e}vy's lemma provides a crucial framework for describing functionals on pure quantum states, with applications in quantum entanglement and quantum statistical query learning. In this work, we isolate the hyper-equatorial part of the standard spherical concentration argument. The resulting estimate is a L\'{e}vy-type bound for Lipschitz functions on a fixed hyperequator, with the natural dimension parameter $d-1$. We also formulate the accompanying geometric localization in terms of neighborhoods of the boundary, hyperequator, and a codimension-two antipodal great subsphere. This viewpoint clarifies the structure of the usual proof and points to the measure-theoretic formulation needed for sharper constant-level statements.
\end{abstract}
\maketitle

\section{Introduction}
The concentration of measure phenomenon is one of the characteristic geometric features of high-dimensional spaces. In its spherical form, the principle says that if $x$ is sampled uniformly at random from a high-dimensional sphere and a function $f$ has a Lipschitz constant, then $f(x)$ is almost constant except on a set whose measure decays exponentially in the dimension.  Equivalently, the spheres form a normal L\'{e}vy family.  This idea, going back to L\'{e}vy and developed in asymptotic convex geometry by Milman, Gromov-Milman, Talagrand, Ledoux, and others, is now a standard language for comparing geometry, probability, and high-dimensional analysis~\cite{L, M1, GM1, MS, T1, T2, Le, BLM, V, FGM, M2, M3, Ma, P}.

The same phenomenon has become a central tool in quantum information theory.  A pure state in a finite-dimensional Hilbert space is a point on a complex unit sphere, and Haar-random states or Haar-random unitaries are governed by the same type of high-dimensional geometry. L\'{e}vy-type estimates are used to prove that generic bipartite pure states are nearly maximally entangled, that typical reduced states are thermal, that random unitaries approximately randomize states, and that unitary designs can replace Haar randomness in several large-deviation arguments~\cite{HLSW, HLW, PSW, GLTZ, Low, HL, BHH} including an approximate QRAM construction for the noisy linear problem~\cite{J23}. More recently, concentration arguments have also appeared in quantum learning and variational quantum algorithms, for example in barren plateau results and in quantum statistical query models for states, processes, and Haar-random unitaries~\cite{GFE, McC, AGY, WD, Ang}. These applications typically need the exponential dependence on dimension, while the precise geometric location of the concentration and the constants in the exponent are often treated only up to universal factors.

This study revisits the spherical proof of L\'{e}vy's lemma from this geometric point of view.  We keep the notation used throughout the paper: $\cl{S}^d$ denotes the closed $d$-dimensional unit ball in $\bb{R}^d$, $\partial \cl{S}^d$ denotes its boundary sphere, $\mu$ denotes the corresponding uniform probability measure on the space under discussion, and
\begin{equation}
E^{d-2}:=\partial\cl{S}^d\cap H_{d-1}^0
\end{equation}
denotes a hyperequator, where $H_{d-1}^0$ is a hyperplane through the origin. The usual L\'{e}vy picture says that most of the boundary measure is contained in a thin neighborhood of a median equator.  We ask how this familiar equatorial localization can be separated from the proof and then followed by an analogous localization around a codimension-two great subsphere inside the hyperequator.

Our first main result is the following hyper-equatorial concentration estimate, proved as Theorem \ref{antipodal'}.

\begin{theoremintro} \label{antipodal}
Let $f:E^{d-2}\to \bb{R}$ be a $\kappa$-Lipschitz function on the hyperequator $E^{d-2}=\partial\cl{E}^{d-1}$.  If $x\in E^{d-2}$ is chosen uniformly at random, then, for every $\eps>0$,
\begin{equation} \label{eq.antipodal}
\pr\left[|f(x)-m_f|\ge\eps\right]
\le C_1\exp\left(\frac{-c(d-1)\eps^2}{\kappa^2}\right),
\end{equation}
where $m_f$ is a median of $f$ and $C_1,c>0$ are universal constants.
\end{theoremintro}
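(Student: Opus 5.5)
The hyperequator $E^{d-2}=\partial\cl{S}^d\cap H^0_{d-1}$ is isometric to the unit sphere $\bb{S}^{d-2}$ sitting in the $(d-1)$-dimensional subspace $H^0_{d-1}\cong\bb{R}^{d-1}$. Moreover, the uniform measure on $E^{d-2}$ is the rotation-invariant probability measure on that sphere. The plan is therefore to reduce the statement to the classical spherical L\'{e}vy lemma in one dimension lower. This reduction is where the parameter $d-1$ comes from. Concretely, I fix an orthonormal basis adapted to $H^0_{d-1}$ and use the linear isometry $\phi:\bb{R}^{d-1}\to H^0_{d-1}$. The function $g=f\circ\phi$ on $\bb{S}^{d-2}$ is then $\kappa$-Lipschitz with respect to the same metric, either Euclidean chordal or geodesic, since $\phi$ preserves both. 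In addition, $\phi$ pushes the normalized surface measure $\sigma_{d-2}$ forward to $\mu$ on $E^{d-2}$, so $g$ and $f$ have the same distribution and the same medians.

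Next I prove, or invoke, the concentration inequality on $\bb{S}^{n}$ with $n=d-2$, via the isoperimetric route. Let $A=\{g\le m_g\}$, so that $\sigma(A)\ge 1/2$. By L\'{e}vy's spherical isoperimetric inequality, the $\eps$-neighborhood $A_\eps$ has measure at least that of the $\eps$-neighborhood of a hemisphere. That neighborhood is the complement of a spherical cap of angular radius $\pi/2-\eps$. The standard cap estimate bounds this cap by
\begin{equation*}
\sigma(\text{cap})\le \sqrt{\pi/8}\,e^{-(n+1)\eps^2/2}\;\text{ or, more crudely, }\; e^{-n\eps^2/2}.
\end{equation*}
Here $n+1=d-1$ is exactly the ambient dimension of the hyperequator. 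On $A_{\eps/\kappa}$ the Lipschitz condition gives $g< m_g+\eps$. Hence
\begin{equation*}
\pr[g\ge m_g+\eps]\le C\exp\!\left(-\frac{c(d-1)\eps^2}{\kappa^2}\right),
\end{equation*}
and applying the same argument to $-g$ and using a union bound gives the two-sided estimate, with $C_1=2C$.

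The main technical point is the cap estimate with the correct dimensional factor. I would obtain it by writing the cap measure through the marginal density of one coordinate, proportional to $(1-t^2)^{(n-2)/2}$, and bounding $1-t^2\le e^{-t^2}$. Alternatively, I would use Gaussian comparison: a uniform point of $\bb{S}^{d-2}$ is $G/|G|$ with $G$ a standard Gaussian vector in $\bb{R}^{d-1}$, and the cap bound then follows from Gaussian concentration of $|G|$ around $\sqrt{d-1}$. A small adjustment of constants handles the mismatch between $n$ and $n+1$ and the small-dimension cases. Since $\pr\le 1$ always, the small-$d$ cases are absorbed by enlarging $C_1$. The geodesic and chordal metrics differ by at most a factor $\pi/2$, which is absorbed into $c$. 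If the paper intends $\partial\cl{S}^d$ to be a complex sphere of quantum states, the same argument applies to the real sphere of doubled dimension, which only changes $c$.
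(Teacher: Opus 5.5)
Your argument is correct. It coincides with the second, isoperimetric derivation the paper gives after the proof of Theorem~\ref{antipodal'} (Eq.~(\ref{eq.isoperim'}) and Proposition~\ref{c.levy}). The paper's formal proof, however, runs through Proposition~\ref{MC.equator}: a Brunn--Minkowski argument applied to the cones $A',B'$ in the $(d-1)$-dimensional hyperdisk $\cl{E}^{d-1}$, which gives $4\exp(-(d-1)\eps^2/(4\kappa^2))$ directly.

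The two routes trade off as follows. The Brunn--Minkowski route is more elementary: it needs no spherical isoperimetric theorem and uses the chordal metric throughout. It also produces the parameter $d-1$ automatically, as the dimension of the ball containing $\frac12(A'+B')$. Its cost is the constant $1/4$ in the exponent. Your route gives the constant $1/2$, but it relies on L\'evy's isoperimetric inequality and needs dimension bookkeeping.

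On that bookkeeping: your isometry to $\bb{S}^{d-2}$ is what legitimizes the paper's step ``by Corollary~\ref{isoperim}''. But the cap bound actually proved in Corollary~\ref{isoperim}, transported to $\bb{S}^{d-2}$, carries the exponent $(d-3)\delta^2/2$. The $(d-1)$ form you quote, which the paper also writes in Eq.~(\ref{eq.isoperim'}), would need its own justification. Your fallback of absorbing the shift into $c$ is therefore the safe way to finish.

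Two small repairs are needed. First, absorbing small $d$ into $C_1$ needs more than $\pr\le 1$, because for fixed $d$ the right-hand side tends to $0$ as $\eps\to\infty$. You also need that the event is empty once $\eps>\kappa\operatorname{diam}(E^{d-2})$, which holds because a median lies in the range of $f$. Second, on $A_{\eps/\kappa}$ you only get $g\le m_g+\eps$, not $g<m_g+\eps$. So you control $\{g>m_g+\eps\}$, and you obtain the $\ge$ version by letting $\eps'\uparrow\eps$.
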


The proof follows the same two classical routes as L\'{e}vy's lemma: one route uses the Brunn--Minkowski inequality and Euclidean neighborhoods, and the other uses the spherical isoperimetric inequality.  The point is not to replace L\'{e}vy's lemma, but to separate the successive geometric localizations that are usually folded into one asymptotic statement: the volume of the ball is near the boundary, the surface measure of the boundary is near a hyperequator, and the induced measure on the hyperequator admits a further localization around a codimension-two antipodal great subsphere.

The second structural point is a caution about this final localization.  The antipodal great subsphere is terminal only relative to the coordinate localization scheme fixed above; a further localization would require additional structure, such as a new choice of coordinates or a new metric-measure space.

\begin{remark} \label{finalCM'}
Let $\cl{A}^{d-3}=E^{d-2}\cap H_{d-2}^0$ denote the codimension-two antipodal great subsphere determined by two orthogonal coordinate hyperplanes.  The localization statement below should be understood in terms of neighborhoods of $\cl{A}^{d-3}$, not as a literal equality between a probability measure and a subset.  In particular, we do not claim that no subset $\cl{X}$ or no further concentration function can exist; any such further statement must specify an additional metric-measure structure.
\end{remark}

The geometric evidence for this final localization is developed in Section \ref{sec.CM}.  In particular, the following three consequences are derived there independently as Corollaries \ref{cor.mu1}, \ref{cor.mu3}, and \ref{cor.mu4}.

\begin{corollaryintro} \label{corin.mu1}
Let $\cl{S}^d$ be a $d$-dimensional hypersphere. Suppose that $\mu$ is a uniform probability measure in the hypersphere $\cl{S}^d$. Then, for sufficiently large $d$,
\begin{equation}
\lim_{d\to\infty}\mu\left(\left\{x\in\cl{S}^d:\operatorname{dist}(x,\partial\cl{S}^d)\le \eta\right\}\right)=1
\end{equation}
for every fixed $\eta>0$.
\end{corollaryintro}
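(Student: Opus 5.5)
The plan is to compute the measure of the complementary inner ball directly from the scaling of Lebesgue measure. Here $\cl{S}^d$ is the closed unit ball in $\bb{R}^d$ and $\mu$ is normalized Lebesgue measure on it. For $0<\eta<1$, the set $\{x\in\cl{S}^d:\operatorname{dist}(x,\partial\cl{S}^d)\le\eta\}$ is exactly the shell $\{x: 1-\eta\le \|x\|\le 1\}$. This holds because, for $x$ in the ball, $\operatorname{dist}(x,\partial\cl{S}^d)=1-\|x\|$: the nearest boundary point is $x/\|x\|$ when $x\neq 0$, and any boundary point when $x=0$. If $\eta\ge 1$, the set is the whole ball and there is nothing to prove. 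So I will fix $0<\eta<1$.

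The complement of the shell is the open ball of radius $1-\eta$. Lebesgue measure in $\bb{R}^d$ scales as $\vol(rB)=r^d\vol(B)$, so
\begin{equation}
\mu\left(\left\{x\in\cl{S}^d:\operatorname{dist}(x,\partial\cl{S}^d)\le\eta\right\}\right)=1-(1-\eta)^d .
\end{equation}
The boundary sphere has measure zero, so openness versus closedness of the inner ball is irrelevant. Using $1-\eta\le e^{-\eta}$ gives the quantitative bound $1-\mu(\cdot)\le e^{-\eta d}$, which tends to $0$ as $d\to\infty$. This yields the limit, and it also gives the exponential rate that the surrounding text calls ``for sufficiently large $d$.''

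The only point needing care is notational. The paper's $\cl{S}^d$ denotes the ball, not the sphere, and $\mu$ must be taken as the normalized volume measure on it, not the surface measure. With that reading there is no real obstacle; the argument is a single scaling identity. I would also note that it needs none of the concentration machinery (Brunn--Minkowski or isoperimetry), which enters only in the subsequent equatorial localizations.
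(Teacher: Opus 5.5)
Your proof is correct and is essentially the paper's argument: the thin-shell statement rests on the scaling ratio $(1-\eta)^d\le\exp(-\eta d)$ recorded in Lemma~\ref{volannu}. Your explicit check that $\operatorname{dist}(x,\partial\cl{S}^d)=1-\|x\|$, so the neighborhood is exactly the shell $\{1-\eta\le\|x\\|\le 1\}$, fills in a step the paper leaves implicit.
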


\begin{corollaryintro} \label{corin.mu3}
Let $\partial\cl{S}^d$ be a $d-1$-dimensional hypersurface. Suppose that $\mu$ is a uniform probability measure on the hypersurface $\partial\cl{S}^d$. Then, for sufficiently large $d$,
\begin{equation}
\lim_{d\to\infty}\mu\left(\left\{x\in\partial\cl{S}^d:\operatorname{dist}(x,E^{d-2})\le \eta\right\}\right)=1,
\end{equation}
for every fixed $\eta>0$, where $E^{d-2}:=\partial\cl{E}^{d-1}$ denotes an arbitrary hyperequator lying on the hypersurface $\partial\cl{S}^d$.
\end{corollaryintro}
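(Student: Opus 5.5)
We need to prove Corollary \ref{corin.mu3}: for the uniform probability measure on the unit sphere $\partial\cl{S}^d\subset\bb{R}^d$ and a fixed hyperequator $E^{d-2}=\partial\cl{S}^d\cap H_{d-1}^0$, the measure of the Euclidean $\eta$-neighborhood of $E^{d-2}$ tends to $1$ as $d\to\infty$. The plan is to reduce to a single coordinate and control it by an explicit estimate.

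By rotation invariance of $\mu$, we may take $H_{d-1}^0=\{x_d=0\}$. For $x\in\partial\cl{S}^d$, the distance from $x$ to $E^{d-2}$ is at most a constant multiple of $|x_d|$; indeed, if $|x_d|<1$, the point $(x_1,\dots,x_{d-1},0)/\sqrt{1-x_d^2}$ lies in $E^{d-2}$ at chordal distance $O(|x_d|)$ from $x$. So it suffices to show that $\mu(\{|x_d|>t\})\to 0$ for every fixed $t>0$.

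I would give two routes.

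\emph{Direct route.} Apply Chebyshev's inequality. By symmetry, $\bb{E}[x_d^2]=1/d$, so
\[
\mu(|x_d|>t)\le \frac{1}{dt^2}\to 0 .
\]
This already proves the corollary.

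\emph{Route through concentration.} To match the paper's spirit and obtain an exponential rate, I would instead apply the standard L\'evy lemma on $\partial\cl{S}^d$, the $d$-dimensional analogue of Theorem \ref{antipodal}, to the $1$-Lipschitz function $f(x)=x_d$. Its median is $0$ by the antipodal symmetry $x\mapsto -x$. This gives
\[
\mu(|x_d|\ge t)\le C_1 e^{-c\,d\,t^2}.
\]
Alternatively, the spherical cap estimate $\mu(x_d\ge t)\le e^{-dt^2/2}$ can be proved directly by the Brunn--Minkowski cone argument used for the paper's first route.

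The only step needing care is the geometric comparison between $\operatorname{dist}(x,E^{d-2})$ and $|x_d|$, which must be stated explicitly. Near the poles, i.e. for $|x_d|$ close to $1$, the distance to $E^{d-2}$ is at most $2\le 2|x_d|/t$ whenever $|x_d|\ge t$. This elementary bound handles that case uniformly, so the comparison holds everywhere up to a constant.

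Taking $t=\eta/C$ then finishes the argument. Nothing here is a real obstacle; it is a routine consequence of concentration of $x_d$.
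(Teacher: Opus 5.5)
Your proof is correct, and your main (Chebyshev) route differs from the paper's. The paper obtains Corollary~\ref{cor.mu3} from Lemma~\ref{surfeq}. There, the first coordinate of a uniform point on $\partial\cl{S}^d$ has density proportional to $(1-t^2)^{(d-3)/2}$, and this gives a Gaussian-type cap bound $\mu\{x_1\ge\eps\}\le C\exp(-c(d-2)\eps^2)$. The passage from the slab $\{|x_1|\le\eps\}$ to the $\eta$-neighborhood of $E^{d-2}$ is left implicit.

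Your second-moment argument ($\bb{E}[x_d^2]=1/d$ from $\sum_i x_i^2=1$ and symmetry) needs neither the marginal density nor a lower bound on its normalizing constant. It is fully sufficient for the limit, though it only yields the polynomial rate $1/(dt^2)$. The paper's route, and your alternative of applying Theorem~\ref{1lipschitz} to $f(x)=x_d$ with median $0$, give the exponential rate that the surrounding discussion is really about. Your alternative trades the explicit density for the general Lipschitz concentration theorem.

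You also make explicit the geometric comparison the paper skips, and it is simpler than you suggest. With $y=(x_1,\dots,x_{d-1},0)/\sqrt{1-x_d^2}$ one has $\|x-y\|^2=2-2\sqrt{1-x_d^2}\le 2x_d^2$. Hence $\{|x_d|\le\eta/\sqrt2\}$ lies in the $\eta$-neighborhood of $E^{d-2}$, and the poles need no separate treatment; only small $|x_d|$ matters for the inclusion anyway.

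One minor point: the Brunn--Minkowski argument of Theorem~\ref{measureconcen}, applied to the hemisphere $\{x_d\le 0\}$, gives $2e^{-dt^2/4}$. The constant $e^{-dt^2/2}$ comes from a different argument, which places the cone over the cap inside a ball of radius $\sqrt{1-t^2}$. None of this affects your conclusion.
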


\begin{corollaryintro} \label{corin.mu4}
Let $E^{d-2}$ be a $d-2$-dimensional hyperequator. Suppose that $\mu$ is a uniform probability measure on the hyperequator $E^{d-2}$. Then, for sufficiently large $d$,
\begin{equation}
\lim_{d\to\infty}\mu\left(\left\{x\in E^{d-2}:\operatorname{dist}(x,\cl{A}^{d-3})\le \eta\right\}\right)=1,
\end{equation}
for every fixed $\eta>0$, where $\cl{A}^{d-3}$ denotes a codimension-one antipodal great subsphere lying on the arbitrary hyperequator $E^{d-2}$.
\end{corollaryintro}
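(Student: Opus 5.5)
The statement to prove is Corollary~\ref{corin.mu4}. Here $E^{d-2}$ is a great sphere of dimension $d-2$, namely the unit sphere of the $(d-1)$-dimensional subspace $H_{d-1}^0$. The set $\cl{A}^{d-3}=E^{d-2}\cap H_{d-2}^0$ is a great subsphere of $E^{d-2}$ of codimension one, cut out by one further hyperplane through the origin. My plan is to reduce the claim to the same computation that gives Corollary~\ref{corin.mu3}, one dimension lower.

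First I choose coordinates. Inside $H_{d-1}^0\cong\bb{R}^{d-1}$, let $u$ be the unit normal of $H_{d-2}^0\cap H_{d-1}^0$. Then $\cl{A}^{d-3}=\{x\in E^{d-2}:\langle x,u\rangle=0\}$. For $x\in E^{d-2}$, the geodesic distance to $\cl{A}^{d-3}$ equals $\arcsin|\langle x,u\rangle|$, and the Euclidean distance is comparable to it. So for $\eta<\pi/2$ the event $\operatorname{dist}(x,\cl{A}^{d-3})>\eta$ is contained in the event $|\langle x,u\rangle|>\sin\eta$ (with $\eta$ replaced by a comparable constant in the chordal case). It therefore suffices to show
\begin{equation*}
\mu\bigl(\{x\in E^{d-2}:|\langle x,u\rangle|>t\}\bigr)\to 0 \quad (d\to\infty)
\end{equation*}
for every fixed $t\in(0,1)$.

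Second, I prove this bound directly. By Theorem~\ref{antipodal}, applied to the $1$-Lipschitz function $f(x)=\langle x,u\rangle$ on $E^{d-2}$, we get $\pr[|f-m_f|\ge t]\le C_1e^{-c(d-1)t^2}$. The map $x\mapsto -x$ preserves $\mu$ and sends $f$ to $-f$, so $f$ is symmetric and $m_f=0$ is a median. Hence the exceptional measure is at most $C_1e^{-c(d-1)t^2}$, which tends to $0$. This gives the limit equal to $1$.

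As an independent check that avoids Theorem~\ref{antipodal}, one can use the explicit marginal density. The density of $s=\langle x,u\rangle$ on $S^{d-2}$ is proportional to $(1-s^2)^{(d-4)/2}$. This gives a tail bound of the form $\le C\sqrt{d}\,(1-t^2)^{(d-4)/2}\le C\sqrt d\,e^{-(d-4)t^2/2}$, using the standard lower bound on the normalizing integral, which is of order $d^{-1/2}$.

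The only delicate point is bookkeeping. One must be clear whether $\operatorname{dist}$ means geodesic or chordal distance; either way the distance is bi-Lipschitz comparable to $|\langle x,u\rangle|$ for small $\eta$. For large $\eta$ the claim is monotone in $\eta$, so it follows from the small-$\eta$ case. One must also track the dimension shift: $E^{d-2}$ is the sphere in $\bb{R}^{d-1}$, which is why $d-1$ appears in the exponent. Nothing else is an obstacle; the rate is exponential in $d$, which is stronger than the stated limit.
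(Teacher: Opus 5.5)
Your proof is correct, but your primary argument is not the paper's. The paper derives Corollary~\ref{cor.mu4} from Proposition~\ref{prop.arcanti}, that is, from the explicit one-dimensional marginal of the coordinate $x_2$ on $E^{d-2}$. That is the computation of Lemma~\ref{surfeq} shifted down one dimension, and it is exactly your ``independent check''. Theorem~\ref{antipodal} is not used in that derivation, even though the summary diagram labels the arrow $E^{d-2}\to\cl{A}^{d-3}$ with it.

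Your main route instead applies Theorem~\ref{antipodal} to the linear functional $x\mapsto\langle x,u\rangle$ and uses antipodal invariance to get $m_f=0$. This needs no estimate of the normalizing integral and treats both tails at once. Through the Brunn--Minkowski proof of Theorem~\ref{antipodal'}, it also gives the explicit bound $4e^{-(d-1)t^2/4}$, with the natural parameter $d-1$ and no $\sqrt{d}$ prefactor. The price is that a purely one-variable fact is made to depend on the heavier Brunn--Minkowski machinery.

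The paper's route is more elementary, but as written it is only a sketch. It bounds only the upper tail $x_2\ge\eps$. It asserts the marginal tail bound without the order-$d^{-1/2}$ lower bound on the normalizing constant. It never relates $\operatorname{dist}(x,\cl{A}^{d-3})$ to $|x_2|$. Your proposal supplies precisely these pieces: the symmetry for the lower tail, the normalization, and the comparison $\operatorname{dist}=\arcsin|\langle x,u\rangle|$ or its chordal analogue. So either of your two arguments gives a complete proof.
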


These statements lead to the following conjectural form of the final concentration locus.

\begin{conj} \label{conj}
Let $\mu$ be a uniform probability measure on a sphere $\cl{S}^{d}$. For sufficiently large $d$,
\begin{equation} \label{eq.conj}
\lim_{d\to\infty}\mu\left(\left(\cl{A}^{d-3}\right)_\eta\right)=1
\end{equation}
for every fixed $\eta>0$, where $\left(\cl{A}^{d-3}\right)_\eta$ denotes the $\eta$-neighborhood of an antipodal great subsphere lying on an arbitrary hyperequator $E^{d-2}$.
\end{conj}

For a uniform probability measure $\mu$ on the $d$-dimensional
hypersphere $\cl{S}^d$ and sufficiently large $d$, the proposed
localization scheme is summarized by
\begin{widetext}
$$\xymatrix{\cl{S}^d
\ar@{.>}[dr]_{\tiny{\mathrm{Corollary}~\ref{cor.mu2}}~~~~{}~~~~{}~~~~{}~~~~{}
~~~~{}~~~~{}~~~~{}~~~~{}~~~~{}~~~~{}~~~~{}~~~~{}}
&\xto{{}~~~~{}~~~~{}~~~~{}\tiny{\mathrm{Corollary}~\ref{corin.mu1}}
{}~~~~{}~~~~{}~~~~{}}
{}~~~~{}~~~~{}~~~~{}\partial\cl{S}^d{}~~~~{}~~~~{}~~~~{}
\xto{\tiny{{}~~~~{}~~~~{}~~~~{}\mathrm{Corollary}~\ref{corin.mu3}}
{}~~~~{}~~~~{}~~~~{}}& E^{d-2} \\
&\cl{E}^{d-1}
\ar@{.>}[ru]_{{}~~~~{}~~~~{}~~~~{}~~~~{}~~~~{}~~~~{}~~~~{}
~~~~{}~~~~{}~~~~{}~~~~{}
\tiny{\mathrm{Corollary}~\ref{cor.mu2}+
\mathrm{Lemma}~\ref{volannu}}}&}
\xto{\tiny{{}~~~~{}~~~~{}~~~~{}\mathrm{Theorem}~\ref{antipodal}
{}~~~~{}~~~~{}~~~~{}}}\cl{A}^{d-3}.
$$
\end{widetext}
The physical implication of the concentration of measure phenomena can be given by Fig.~\ref{Figure-1}. Also, we notice that a quantum state $\varrho$ obeys $\T\varrho=1$, i.e., it always (geometrically) forms a \emph{unit sphere} at any dimension $d$. 

\begin{figure*}
\centering
\includegraphics[width=15.5cm]{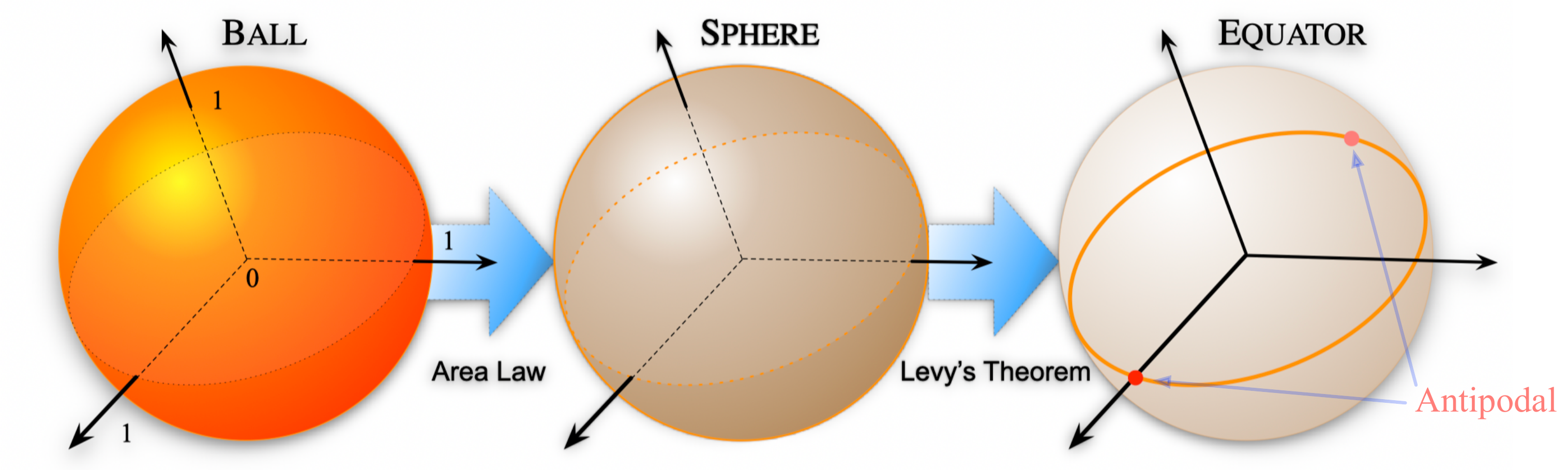}
\caption{Graphical representation on the phenomena of the concentration of measure for the higher dimensional quantum states: the area law corresponds to the holographic principle in high-energy physics.}
\label{Figure-1}
\end{figure*}

\subsection{Organization of the paper}
Section \ref{sec.levy} recalls the standard proof of L\'{e}vy's lemma, including the Brunn--Minkowski argument and the spherical
isoperimetric formulation. Section \ref{sec.main} proves the main hyper-equatorial estimate, first by the Lipschitz-function approach and then by the corresponding isoperimetric argument.  Section \ref{sec.CM} records the geometric concentration estimates for the volume, boundary, hyperequator, and antipodal locus of a high-dimensional hypersphere, and uses them to motivate Conjecture \ref{conj}. Section \ref{sec.conclusion} summarizes the main message and discusses possible directions for sharpening the result.

\subsection{Literature review}
\paragraph{Random states and random unitaries.}
The quantum information use of L\'{e}vy's lemma begins with the observation that Haar-random pure states are ordinary random points on a high-dimensional sphere.  Hayden, Leung, and Winter used this to show that random bipartite states and random subspaces display generic near-maximal entanglement~\cite{HLW}. Hayden, Leung, Shor, and Winter used related random-unitary concentration methods for approximate private quantum channels and state randomization~\cite{HLSW}. Canonical
typicality and the entanglement-based foundations of statistical mechanics use the same mechanism to show that small subsystems of typical constrained pure states look thermal~\cite{PSW, GLTZ}. The modern replacement of Haar randomness by efficient pseudorandom ensembles is treated through unitary designs and random circuits~\cite{Low, HL, BHH}. Additionally, an efficient quantum algorithm, for the noisy linear problem (NLP) or learning-with-error (LWE) problem under approximate quantum random access memory (QRAM), is proposed~\cite{J23}.

\paragraph{Concentration inequalities.}
The general concentration framework used here is the one developed in asymptotic geometric analysis and probability: L\'{e}vy's spherical isoperimetric inequality, the Gromov--Milman theory of L\'{e}vy families, Talagrand's inequalities, and the nonasymptotic concentration viewpoint~\cite{L, GM1, GM2, MS, T1, T2, Le, BLM, V, P}.  In recent quantum information theory, the same estimates underlie statements about typical entanglement, the rarity of useful highly entangled states for measurement-based computation, concentration on constrained state manifolds, barren plateaus in variational quantum circuits, and query lower bounds for learning Haar-random or design-like processes~\cite{GFE, MGE, McC, AGY, WD, Ang}.

\section{L\'{e}vy's theorem revisit} \label{sec.levy}

We begin by recalling well-known results on L\'{e}vy's theorem~\cite{L, Le, MS}.

\begin{theorem}[L\'{e}vy] \label{a.levy}
Let $F$ be a function $F:\partial\cl{S}^d\to \bb{R}$ defined on the $d$-dimensional sphere boundary $\partial\cl{S}^d$. Suppose that a point $x\in \partial\cl{S}^d$ is chosen uniformly at random. Then, for all $\delta>0$,
\begin{equation} \label{eq.a.levy}
\pr\left[|F(x)-\bb{E}(F)|\ge\delta\right]
\le C_1\exp\left(\frac{-C_2d\delta^2}{\kappa^2}\right),
\end{equation}
where $\kappa:=\sup|\nabla F|$ is the Lipschitz constant of $F$, and $C_1$, $C_2$ are positive constants.
\end{theorem}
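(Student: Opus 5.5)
The plan is to prove the bound first with the median $m_F$ in place of the mean, and then transfer it to $\bb{E}(F)$. By rescaling we may take $\kappa=1$, replacing $\delta$ by $\delta/\kappa$. The natural route, matching the two routes announced in the introduction, goes through the spherical isoperimetric inequality. Let $\sigma$ be the normalized surface measure on $\partial\cl{S}^d$, the $(d-1)$-sphere. If $A\subset\partial\cl{S}^d$ has $\sigma(A)\ge 1/2$, then among all sets of that measure the $\delta$-neighborhood $A_\delta$ has the smallest measure when $A$ is a hemisphere. Hence $\sigma(A_\delta)\ge\sigma(\text{cap of radius }\pi/2+\delta)$. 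The complement of that cap is a spherical cap of angular radius $\pi/2-\delta$.

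\textbf{Cap estimate.} The key computation bounds the measure of a cap $\{x:x_1\ge \sin\delta\}$ by $C\exp(-(d-2)\delta^2/2)$ or a similar expression. I would do this elementarily, in the Brunn--Minkowski style. Take the cone over the cap from the origin. It is contained in a ball of radius $\sqrt{1-t^2/4}$ (with $t\approx\delta$) centred at $(t/2)e_1$. Comparing volumes via $\sigma(\mathrm{cap})=\vol(\text{cone})/\vol(\text{unit ball})$ gives $\sigma\le(1-t^2/4)^{d/2}\le e^{-d t^2/8}$. Alternatively, one can apply Brunn--Minkowski directly: take $A$ and $B=(A_\delta)^c$ and use $\tfrac{1}{2}(a+b)$ for $a\in\tilde A$, $b\in\tilde B$ (the cones), which has norm at most $1-\delta^2/8$. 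This gives $\min(\vol\tilde A,\vol\tilde B)\le(1-\delta^2/8)^d\vol(\cl{S}^d)$, so $\sigma(B)\le 2e^{-d\delta^2/4}$. I favor the Brunn--Minkowski version because it avoids needing the full isoperimetric theorem. One small point: Euclidean versus geodesic distance differ only by constants, and the Lipschitz condition is taken in either metric.

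\textbf{Median bound.} Apply the cap estimate to $A=\{F\le m_F\}$ and to $A=\{F\ge m_F\}$, each of measure at least $1/2$. Since $F$ is $1$-Lipschitz, $A_\delta\subset\{F\le m_F+\delta\}$ in the first case. This yields $\pr[|F-m_F|\ge\delta]\le 4e^{-cd\delta^2}$.

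\textbf{Passing to the mean.} Integrate the tail to get $|\bb{E}F-m_F|\le\int_0^\infty 4e^{-cd s^2}ds=C/\sqrt d$. Now split on the size of $\delta$. If $\delta\ge 2C/\sqrt d$, then $|F-\bb{E}F|\ge\delta$ forces $|F-m_F|\ge\delta/2$, which has probability at most $4e^{-cd\delta^2/4}$. If $\delta<2C/\sqrt d$, the right-hand side of the claimed inequality is at least $C_1e^{-4cC^2}$, so the claim holds trivially once $C_1$ is taken large enough. Restoring $\kappa$ gives the statement with adjusted $C_1,C_2$.

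The main obstacle is the geometric cap/Brunn--Minkowski step, specifically justifying the midpoint norm bound for cones and the conversion between Euclidean neighborhoods on the sphere and cone volumes. The rest is routine.
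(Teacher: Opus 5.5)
Your proposal is correct and follows essentially the paper's own route. The paper likewise applies Brunn--Minkowski to the cones over $A$ and $B=(A_\delta)^c$ with the midpoint bound $1-\delta^2/8$ (Lemmas~\ref{BMineq}, \ref{euclid.dist} and Theorem~\ref{measureconcen}), then gets the median bound from $\{F\le m_F\}$ and $\{F\ge m_F\}$ (Theorem~\ref{1lipschitz}), keeping the isoperimetric argument as an alternative; your explicit median-to-mean transfer (integrating the tail to get $|\bb{E}F-m_F|=O(1/\sqrt d)$ and splitting on $\delta$) fills in the step the paper only asserts as a standard comparison.
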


Above theorem can be derived by followings: The Brun-Minkowski inequality~\cite{G, M, Ga} states that, for any nonempty compact set $A,B\subset\bb{R}^d$, $\vol(A)^{1/d}+\vol(B)^{1/d}\le\vol(A+B)^{1/d}$ where $A+B:=\{a+b|a\in A, b\in B\}$ denotes the Minkowski sum of $A$ and $B$.

\begin{lemma} \label{BMineq}
Let $A,B$ are any nonempty compact sets in $\bb{R}^d$. Then
\begin{equation} \label{eq.BMineq}
\vol\left(\frac{A+B}{2}\right)\ge\sqrt{\vol(A)\vol(B)},
\end{equation}
where $A+B:=\{a+b|a\in A, b\in B\}$ denotes the Minkowski sum of $A$ and $B$.
\end{lemma}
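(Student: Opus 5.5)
The plan is to obtain Lemma \ref{BMineq} directly from the Brunn--Minkowski inequality quoted just before it, namely $\vol(A)^{1/d}+\vol(B)^{1/d}\le\vol(A+B)^{1/d}$ for nonempty compact $A,B\subset\bb{R}^d$. The argument uses three ingredients: the homogeneity of Lebesgue measure under dilations, the Brunn--Minkowski inequality, and the arithmetic--geometric mean inequality.

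First, I would rewrite the left-hand side of \eqref{eq.BMineq} using homogeneity. Since $\frac{A+B}{2}=\frac12 A+\frac12 B$ as Minkowski sums, and $\vol(tK)=t^d\vol(K)$ for $t>0$, we get
\begin{equation*}
\vol\left(\frac{A+B}{2}\right)^{1/d}=\vol\left(\tfrac12A+\tfrac12B\right)^{1/d}.
\end{equation*}
Applying Brunn--Minkowski to the compact sets $\tfrac12A$ and $\tfrac12B$ then gives
\begin{equation*}
\vol\left(\tfrac12A+\tfrac12B\right)^{1/d}\ge\tfrac12\vol(A)^{1/d}+\tfrac12\vol(B)^{1/d}.
\end{equation*}

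Second, I would apply the AM--GM inequality to the nonnegative numbers $a=\vol(A)^{1/d}$ and $b=\vol(B)^{1/d}$:
\begin{equation*}
\tfrac12(a+b)\ge\sqrt{ab}=\bigl(\vol(A)\vol(B)\bigr)^{1/(2d)}.
\end{equation*}
Raising the resulting chain of inequalities to the $d$-th power is legitimate because $t\mapsto t^d$ is monotone on $[0,\infty)$. This yields $\vol\left(\frac{A+B}{2}\right)\ge\sqrt{\vol(A)\vol(B)}$, which is \eqref{eq.BMineq}.

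No step is a serious obstacle. The only points requiring care are bookkeeping checks:
\begin{itemize}
\item $\frac12A$ and $\frac12B$ are nonempty and compact, so Brunn--Minkowski applies to them;
\item the degenerate case $\vol(A)=0$ or $\vol(B)=0$ is trivial, since the right-hand side of \eqref{eq.BMineq} is then $0$.
\end{itemize}
If one wished to avoid the additive form of Brunn--Minkowski altogether, one could instead invoke the Pr\'ekopa--Leindler inequality with $\lambda=\tfrac12$ applied to the indicator functions $\mathbb{1}_A,\mathbb{1}_B,\mathbb{1}_{(A+B)/2}$. That gives the multiplicative form immediately, but the first route suffices here.
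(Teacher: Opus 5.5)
Your proposal is correct and follows essentially the same route as the paper: apply Brunn--Minkowski to $\tfrac12 A$ and $\tfrac12 B$, use $\vol(\tfrac12 K)^{1/d}=\tfrac12\vol(K)^{1/d}$, then apply AM--GM and raise to the $d$-th power. The paper's phrase ``setting $d=1$'' plays no role in its chain of inequalities, which keeps general $d$ exactly as you do.
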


\begin{proof}
By Brun-Minkowski inequality and setting $d=1$,
\begin{eqnarray}
\vol\left(\frac{A+B}{2}\right)^{1/d}
&\ge&\vol\left(\frac{A}{2}\right)^{1/d}+\vol\left(\frac{B}{2}\right)^{1/d} \nonumber\\
&=&\frac{1}{2}\left(\vol(A)^{1/d}+\vol(B)^{1/d}\right) \\
&\ge&\left(\vol(A)\vol(B)\right)^{1/2d}. \nonumber
\end{eqnarray}
\end{proof}

Let $A$ be a nonempty compact subset of $\partial\cl{S}^d$. Define $A':=\{\gamma x|x\in A, \gamma\in[0,1]\}$ in $\cl{S}^d$. Then
\begin{equation}
\pr[A]=\mu(A'):=\frac{\vol(A')}{\vol(\cl{S}^d)}.
\end{equation}
Let $B$ be a set of $\partial\cl{S}^d\backslash(A\cup A_\eps)$, where $A_\eps$ denotes the $\eps$-neighborhood of $A$ such that the set of all $x\in\partial\cl{S}^d$ whose Euclidean norm $\|\cdot\|$ to $A$ is at most $\eps$. If $\eps\in[0,1]$ and $\pr[A]\ge1/2$, then, for all $a\in A$ and $b\in B$, $\|a-b\|\ge\eps$.

\begin{lemma} \label{euclid.dist}
Let $A'$ and $B'$ are subsets in $\cl{S}^d$. Then for any $x'\in A'$ and $y'\in B'$
\begin{equation} \label{eq.euclid.dist}
\left\|\frac{x'+y'}{2}\right\|\le1-\frac{\eps^2}{8}.
\end{equation}
\end{lemma}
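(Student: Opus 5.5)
I read the statement in the setting just fixed: $A'=\{\gamma a: a\in A,\gamma\in[0,1]\}$, $B'=\{\gamma b: b\in B,\gamma\in[0,1]\}$, with $a\in A$, $b\in B\subset\partial\cl{S}^d$ satisfying $\|a-b\|\ge\eps$ and $\eps\in[0,1]$. Without this hypothesis the bound is false, since $x'=y'=0$ is allowed. The inequality cannot hold for all radial points: for $x'=y'=0$ the midpoint is $0$, which is fine, but the midpoint norm can be as large as $1$ only when both points are on the boundary. The proof should therefore first handle boundary points and then the general case, using that scaling toward the origin only shrinks things. A caveat: for general $\gamma,\gamma'$ the bound can actually fail. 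Take $x'=a$ and $y'=\gamma' b$ with $\gamma'$ small; then $\|(a+\gamma' b)/2\|\approx 1/2$, which is fine. But take $\gamma'$ close to $1$ and the bound holds by continuity. The worry is intermediate values. I address this in the last paragraph.

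\emph{Step 1 (boundary points).} For $a,b\in\partial\cl{S}^d$ with $\|a-b\|\ge\eps$, apply the parallelogram law:
\[
\left\|\frac{a+b}{2}\right\|^2=1-\left\|\frac{a-b}{2}\right\|^2\le 1-\frac{\eps^2}{4}.
\]
Using $\sqrt{1-t}\le 1-t/2$, this gives $\|(a+b)/2\|\le 1-\eps^2/8$.

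\emph{Step 2 (general radial points).} Write $x'=\gamma a$ and $y'=\gamma' b$ with $\gamma,\gamma'\in[0,1]$, and assume without loss of generality $\gamma\ge\gamma'$. Then
\[
\frac{x'+y'}{2}=\gamma'\,\frac{a+b}{2}+(\gamma-\gamma')\frac{a}{2}.
\]
The triangle inequality and Step 1 give
\[
\left\|\frac{x'+y'}{2}\right\|\le \gamma'\Big(1-\frac{\eps^2}{8}\Big)+\frac{\gamma-\gamma'}{2}.
\]
The right-hand side is affine in $(\gamma,\gamma')$ on the triangle $0\le\gamma'\le\gamma\le1$. Its maximum is therefore attained at a vertex: $(1,1)$ gives $1-\eps^2/8$, $(1,0)$ gives $1/2$, and $(0,0)$ gives $0$. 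Since $\eps\le1$, we have $1/2\le 1-\eps^2/8$, so the bound holds throughout.

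\emph{Main obstacle.} The only real subtlety is justifying the reduction to the cone structure. The statement as written says ``subsets in $\cl{S}^d$'', and for arbitrary subsets it is false. So I would state explicitly that $A',B'$ are the radial cones over $A,B$ with $B=\partial\cl{S}^d\setminus A_\eps$. The vertex check in Step 2 is exactly where the restriction $\eps\le1$ is used.
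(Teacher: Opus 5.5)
Your proof is correct and follows the paper's route: the parallelogram bound $\|(a+b)/2\|\le\sqrt{1-\eps^2/4}\le 1-\eps^2/8$ for boundary points, then the triangle-inequality split $\gamma'\frac{a+b}{2}+(\gamma-\gamma')\frac{a}{2}$, which for $\gamma=1$ is exactly the paper's display. The paper writes out only the case where one scaling factor equals $1$, and your vertex check (or simply homogeneity) supplies the general case it leaves implicit.

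You should, however, drop the opening caveats. The pair $x'=y'=0$ is no counterexample; for arbitrary subsets the right one is $x'=y'\in\partial\cl{S}^d$. Your own Step 2 shows the bound never fails for radial points, so the claim that it ``can actually fail'' for intermediate $\gamma,\gamma'$ is false.
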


\begin{proof}
Let $\gamma_1, \gamma_2\in[0,1]$. If $x'=\gamma_1x$, $y'=\gamma_2y$,
$x\in A$, and $y\in B$, then
$\left\|\frac{x+y}{2}\right\|\le\sqrt{1-\frac{\eps^2}{4}}\le1-\frac{\eps^2}{8}$.
Suppose that $\gamma_2=1$, then by triangle inequality
\begin{eqnarray*}
\left\|\frac{x'+y'}{2}\right\|=\left\|\frac{\gamma_1x+y}{2}\right\|
&\le&\gamma_1\left\|\frac{x+y}{2}\right\|+(1-\gamma_1)\left\|\frac{y}{2}\right\| \\
&\le&\gamma_1\left(1-\frac{\eps^2}{8}\right)+(1-\gamma_1)\frac{1}{2} \\
&\le&1-\frac{\eps^2}{8}.
\end{eqnarray*}
\end{proof}

Lemma~\ref{BMineq} and Eq.~(\ref{eq.euclid.dist}) directly imply the following measure concentration estimate for the sphere.

\begin{theorem} \label{measureconcen}
Let $A\subseteq\partial\cl{S}^d$ be a measurable set with $\pr[A]\ge\frac{1}{2}$. Suppose that $A_\eps$ denotes the $\eps$-neighborhood of $A$, i.e., the set of all $x\in\cl{S}^d$ whose Euclidean norm to $A$ is at most $\eps$. Then
\begin{equation} \label{eq.measureconcen}
1-\pr[A_\eps]\le2\exp\left(-\frac{\eps^2d}{4}\right).
\end{equation}
\end{theorem}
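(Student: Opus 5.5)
The plan is the classical Brunn--Minkowski argument via cones. Set $B:=\partial\cl{S}^d\setminus A_\eps$, so that $1-\pr[A_\eps]=\pr[B]$, and form the cones $A'=\{\gamma x: x\in A,\gamma\in[0,1]\}$ and $B'=\{\gamma y: y\in B,\gamma\in[0,1]\}$ inside the ball $\cl{S}^d$. By the normalization above, $\pr[A]=\mu(A')$ and $\pr[B]=\mu(B')$. We may assume $B\neq\emptyset$. We may also assume $A$ and $B$ are compact: approximate $A$ from inside by compact sets, and replace $B$ by its compact closure or inner compact approximations. This keeps the separation property that every $a\in A$, $b\in B$ satisfy $\|a-b\|\ge\eps$. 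Likewise we may assume $\eps\le 1$, since otherwise we can shrink $\eps$ and the bound only gets weaker in the relevant range. For $\eps$ up to $2$ the estimate is trivial or follows the same way.

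\textbf{Step 1: the midpoint set lies in a smaller ball.} Lemma~\ref{euclid.dist} gives $\|(x'+y')/2\|\le 1-\eps^2/8$ for all $x'\in A'$, $y'\in B'$. Its proof treats $\gamma_2=1$, but the general case follows by the same convexity: the norm is convex in $(\gamma_1,\gamma_2)$, so its maximum over $[0,1]^2$ is attained at a vertex. At the vertices the value is either at most $1-\eps^2/8$ or at most $1/2\le 1-\eps^2/8$ (using $\eps\le 2$). Hence
\[
\frac{A'+B'}{2}\subseteq\Bigl(1-\frac{\eps^2}{8}\Bigr)\cl{S}^d,
\]
and therefore $\vol\bigl(\tfrac{A'+B'}{2}\bigr)\le (1-\eps^2/8)^d\vol(\cl{S}^d)$ by homogeneity of volume.

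\textbf{Step 2: apply Lemma~\ref{BMineq}.} The lemma gives $\sqrt{\mu(A')\mu(B')}\le (1-\eps^2/8)^d\le \exp(-d\eps^2/8)$. Squaring and using $\mu(A')=\pr[A]\ge 1/2$ yields
\[
\pr[B]\le 2\exp\Bigl(-\frac{d\eps^2}{4}\Bigr),
\]
which is exactly \eqref{eq.measureconcen}.

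The main obstacle is Step 1. One has to check that the separation $\|a-b\|\ge\eps$ on the boundary sphere really yields the midpoint bound for all cone points, including the case where both $\gamma_1,\gamma_2<1$. I would handle this with the convexity and extreme-point argument above rather than by a case split. A second, milder point is the measurability and compactness reduction needed to invoke Lemma~\ref{BMineq}. Inner regularity of Lebesgue measure handles it, with the neighborhood defined relative to $\partial\cl{S}^d$ so that the cone $B'$ has the stated measure.
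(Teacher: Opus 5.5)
Your proposal is correct and follows the paper's proof essentially step for step. Both form the cones $A',B'$, use Lemma~\ref{euclid.dist} to place $\frac{A'+B'}{2}$ in the ball of radius $1-\eps^2/8$, and combine Lemma~\ref{BMineq} with $\pr[A]\ge\frac12$ to get $\pr[B]\le 2(1-\eps^2/8)^{2d}\le 2\exp(-\eps^2 d/4)$. Your convexity/vertex argument cleanly covers the case $\gamma_1,\gamma_2<1$, which the paper's proof of Lemma~\ref{euclid.dist} leaves implicit.

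One small quibble: shrinking $\eps$ does not justify a reduction to $\eps\le 1$, since it only yields a weaker bound. The reduction is not needed anyway: Step~1 works for every $\eps\le 2$, and for $\eps\ge 2$ one has $A_\eps=\partial\cl{S}^d$, so the bound holds trivially.
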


\begin{proof}
By Lemma~\ref{euclid.dist}, the set $\frac{A'+B'}{2}$ is contained in the sphere of radius $1-\frac{\eps^2}{8}$. By the Brun-Minkowski inequality; Lemma~\ref{BMineq}, and $\pr[A]\ge\frac{1}{2}$,
\begin{eqnarray}
\left(1-\frac{\eps^2}{8}\right)^d
&\ge&\mu\left(\frac{1}{2}(A'+B')\right)\ge\sqrt{\mu(A')\mu(B')} \\
&=&\sqrt{\pr[A]\pr[B]}\ge\sqrt{\frac{1}{2}\pr[B]}.
\end{eqnarray}
Thus, $\pr[B]\le2\left(1-\frac{\eps^2}{8}\right)^{2d}\le2\exp\left(-\frac{\eps^2d}{4}\right)$.
\end{proof}

We now record L\'{e}vy's theorem on the $d-1$-dimensional shell $\partial\cl{S}^{d}$ via concentration of Lipschitz functions. Let $f$ be a function $f:\partial\cl{S}^d\to\bb{R}$ defined on a sphere $\cl{S}^d$, and suppose that $f$ is 1-Lipschitz. A median $m_f$ of $f$ is any real number such that
\begin{equation} \label{eq.expect1}
\pr[f<m_f]\le\frac{1}{2}~~~~\mathrm{and}~~~~
\pr[f>m_f]\le\frac{1}{2}.
\end{equation}
The corresponding mean form follows from the standard comparison between medians and expectations for Lipschitz functions under a concentration bound.

\begin{theorem} \label{1lipschitz}
Let a function $f:\partial\cl{S}^d\to\bb{R}$ be 1-Lipschitz. Then, for all $\eps\in[0,1]$,
\begin{equation} \label{eq.1lipschitz}
\pr[|f-m_f|>\eps]\le C_1\exp\left(-c d\eps^2\right),
\end{equation}
where $C_1,c>0$ are universal constants.
\end{theorem}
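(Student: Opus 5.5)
The proof will reduce Theorem~\ref{1lipschitz} to the set-concentration estimate of Theorem~\ref{measureconcen}, applied twice to the two "half-spaces" determined by the median. Define
\begin{equation*}
A:=\{x\in\partial\cl{S}^d : f(x)\le m_f\},\qquad A^+:=\{x\in\partial\cl{S}^d : f(x)\ge m_f\}.
\end{equation*}
By the definition of a median in Eq.~(\ref{eq.expect1}), $\pr[f>m_f]\le\frac12$, so $\pr[A]\ge\frac12$. Likewise $\pr[f<m_f]\le\frac12$ gives $\pr[A^+]\ge\frac12$. Both sets are closed, since $f$ is continuous, and hence measurable. They therefore meet the hypotheses of Theorem~\ref{measureconcen}.

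The key step is to use the Lipschitz condition to relate $\eps$-neighborhoods to level sets. Take $x\in A_\eps$, so there is $a\in A$ with $\|x-a\|\le\eps$. The 1-Lipschitz property, taken with respect to the Euclidean distance as in the paper, gives $f(x)\le f(a)+\eps\le m_f+\eps$. Hence
\begin{equation*}
\{f>m_f+\eps\}\subseteq \partial\cl{S}^d\setminus A_\eps,
\end{equation*}
and symmetrically $\{f<m_f-\eps\}\subseteq\partial\cl{S}^d\setminus A^+_\eps$. If one prefers geodesic distance, the same argument goes through, because Euclidean distance is dominated by geodesic distance. Only the direction "Euclidean-close implies $f$-close" is needed, and this holds for Euclidean-Lipschitz $f$.

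Combining the two inclusions with Theorem~\ref{measureconcen} and a union bound gives
\begin{equation*}
\pr[|f-m_f|>\eps]\le \bigl(1-\pr[A_\eps]\bigr)+\bigl(1-\pr[A^+_\eps]\bigr)\le 4\exp\left(-\frac{\eps^2 d}{4}\right).
\end{equation*}
This is Eq.~(\ref{eq.1lipschitz}) with $C_1=4$ and $c=\frac14$. The restriction $\eps\in[0,1]$ is inherited from Theorem~\ref{measureconcen}, which was derived assuming $\eps\in[0,1]$ in Lemma~\ref{euclid.dist}. For $\eps>2$ the event $\{|f-m_f|>\eps\}$ is empty anyway, because the diameter of the sphere is $2$.

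The main point requiring care is bookkeeping rather than analysis. Theorem~\ref{measureconcen} defines $A_\eps$ inside the ball $\cl{S}^d$, but the probability $\pr$ is the surface measure on $\partial\cl{S}^d$ (via the cone construction $A'$). So I would state explicitly that the complement being bounded is $B=\partial\cl{S}^d\setminus A_\eps$ on the boundary sphere, exactly as in the proof of Theorem~\ref{measureconcen}.

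For a $\kappa$-Lipschitz function, apply the result to $f/\kappa$; this replaces $\eps$ by $\eps/\kappa$ and gives the form in Theorem~\ref{a.levy} with the median in place of the mean. Passing to the mean uses the standard estimate $|\bb{E}f-m_f|\le\int_0^\infty \pr[|f-m_f|>t]\,dt=O(\kappa/\sqrt d)$, after which the constants are adjusted.
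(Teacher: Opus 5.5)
Your proof is correct and is essentially the paper's own argument: the two median sets, the Lipschitz inclusion into the complements of the $\eps$-neighborhoods, Theorem~\ref{measureconcen} applied twice, and a union bound giving $4\exp(-\eps^2 d/4)$. Your inclusion $\{f>m_f+\eps\}\subseteq\partial\cl{S}^d\setminus A_\eps$ is in fact the correctly oriented one; the paper's text states this containment in reverse.

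The only slip is in your geodesic aside. For a geodesically $1$-Lipschitz $f$, the comparison you need is $d_{\cl{S}}(x,y)\le\frac{\pi}{2}\|x-y\|$, not $\|x-y\|\le d_{\cl{S}}(x,y)$. So the argument goes through only after shrinking $c$ by a factor $4/\pi^2$, which is harmless because the constants are universal.
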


\begin{proof}
Suppose that $A_-:=\{x\in\partial\cl{S}^d|f(x)\le m_f\}$ and $A_+:=\{x\in\partial\cl{S}^d|f(x)\ge m_f\}$. Then $\pr[A_-]\ge\frac{1}{2}$ and $\pr[A_+]\ge\frac{1}{2}$. Since $f$ is 1-Lipschitz, the complement of $(A_-)_\eps$ is contained in $\{x:f(x)>m_f+\eps\}$, and the complement of $(A_+)_\eps$ is contained in $\{x:f(x)<m_f-\eps\}$. By Theorem~\ref{measureconcen},
\begin{equation}
\pr[|f-m_f|>\eps]\le4\exp\left(-\frac{\eps^2d}{4}\right),
\end{equation}
which has the asserted form after renaming the universal constants.
\end{proof}

The L\'{e}vy's theorem also can be proved by classical isoperimetric inequality in Appendix I of Milman and Schechtman~\cite{MS}.

Let $(\cl{M},d_{\cl{M}},\mu)$ be a compact metric space $(\cl{M},d_{\cl{M}})$ with a Borel probability measure $\mu$.
Let $A$ be a set in the metric space $(\partial\cl{S}^d,d_{\partial\cl{S}^d})$ with a {\it geodesic} metric $d_{\cl{S}}(:=d_{\partial\cl{S}^d})$ on $\partial\cl{S}^d$ and $\delta\in[0,1]$. Define a subset $A_\delta:=\{x|d_{\cl{S}}(x,A)\le\delta\}$ from $A$. Then, for some fixed $x_0\in\partial\cl{S}^d$, $A_0=\{x|d_{\cl{S}}(x,x_0)\le r\}$ with $\mu(A_0)=a$. Then a classical isoperimetric inequality is following

\begin{theorem}
Let $A$ be a set in the metric space $(\partial\cl{S}^d,d_{\cl{S}},\mu)$ with a geodesic metric $d_{\cl{S}}$, and $\delta\in[0,1]$.
For each $a\in[0,1]$, there exists
\begin{equation}
\mu_\delta=\min\{\mu(A_\delta)|A\subseteq\partial\cl{S}^d, \mu(A)=a\},
\end{equation}
and it is obtained at $A_0$.
\end{theorem}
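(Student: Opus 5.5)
The plan is to follow the two-point symmetrization proof of the spherical isoperimetric inequality (Benyamini's argument, as in Appendix I of Milman and Schechtman~\cite{MS}). We may restrict to closed sets $A$, since replacing $A$ by its closure leaves $A_\delta$ unchanged and does not decrease $\mu(A)$. The mass constraint $\mu(A)=a$ is then handled by working with the weaker condition $\mu(A)\ge a$ throughout.

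\emph{Step 1: the two-point symmetrization.} Fix a hyperplane $H$ through the origin that does not contain $x_0$. Let $\sigma_H$ be the reflection in $H$, and let $H^+$ be the open half-space containing $x_0$. For a closed $A\subseteq\partial\cl{S}^d$, define $T_HA$ pointwise, for each pair $\{x,\sigma_Hx\}$ with $x\in H^+$:
\begin{itemize}
\item if both $x$ and $\sigma_Hx$ lie in $A$, or neither does, leave the pair unchanged;
\item if exactly one of them lies in $A$, put $x$ (the point on the $x_0$ side) into $T_HA$ and leave $\sigma_Hx$ out.
\end{itemize}
Points of $H$ itself are kept as they are. Three properties need to be verified.
\begin{enumerate}
\item $T_HA$ is closed and $\mu(T_HA)=\mu(A)$. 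This holds because $\sigma_H$ preserves $\mu$.
\item $(T_HA)_\delta\subseteq T_H(A_\delta)$, and hence $\mu((T_HA)_\delta)\le\mu(A_\delta)$. This is a case check on pairs $\{y,\sigma_Hy\}$. It uses that for $x,y\in H^+$ one has $d_{\cl{S}}(x,y)\le d_{\cl{S}}(x,\sigma_Hy)$.
\item Caps centred at $x_0$ are fixed by every $T_H$.
\end{enumerate}

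\emph{Step 2: existence of a minimizer.} Equip the family $\cl{C}$ of nonempty closed subsets of $\partial\cl{S}^d$ with the Hausdorff metric, under which it is compact (Blaschke selection). The map $A\mapsto\mu(A)$ is upper semicontinuous on $\cl{C}$. This gives compactness of the class $\cl{M}$ of closed sets $B$ with $\mu(B)\ge a$ and $\mu(B_\delta)\le\mu_\delta^*$, where $\mu_\delta^*$ is the infimum of $\mu(A_\delta)$ over sets with $\mu(A)\ge a$. To get this I would use that $\mu(B_\delta)=\lim_{\eps\downarrow0}\mu(B_{\delta+\eps})$ and that $\mu(B_{\delta})\le\liminf\mu((B_n)_{\delta+\eps})$ whenever $B_n\to B$. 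These show that Hausdorff limits of minimizing sequences remain in $\cl{M}$, so $\cl{M}$ is nonempty and closed. By Step 1, $\cl{M}$ is also invariant under every $T_H$.

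\emph{Step 3: the cap is in $\cl{M}$.} Choose $B\in\cl{M}$ maximizing the continuous functional $\mu(B\cap A_0)$, where $A_0$ is the cap of measure $a$ centred at $x_0$. The aim is to show $A_0\subseteq B$ up to null sets.

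Suppose not. Then by density-point arguments there are points $x\in A_0\setminus B$ and $y\in B\setminus A_0$ that are Lebesgue density points (of the complement of $B$ and of $B$, respectively). Take $H$ to be the perpendicular bisector hyperplane of $x,y$, so that $x$ lies on the $x_0$ side; this holds because $d_{\cl{S}}(x,x_0)<d_{\cl{S}}(y,x_0)$. Then $T_HB$ moves a positive-measure part of $B$ near $y$ into $A_0$ near $x$, and it removes nothing from $A_0\cap B$. Hence $\mu(T_HB\cap A_0)>\mu(B\cap A_0)$, contradicting maximality.

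It follows that $\mu(A_0\setminus B)=0$. A set containing $A_0$ up to a null set has $\delta$-neighbourhood containing $(A_0)_\delta$, since caps are regular. Therefore $\mu((A_0)_\delta)\le\mu(B_\delta)\le\mu_\delta^*$. Thus the minimum exists, equals $\mu_\delta$, and is attained at $A_0$.

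The main obstacle is Step 2. Lower semicontinuity of $B\mapsto\mu(B_\delta)$ fails in general, so I would first prove the result with $A_\delta$ replaced by the open neighbourhoods $\{x: d_{\cl{S}}(x,A)<\delta\}$ and then pass to closed neighbourhoods by letting the radius decrease to $\delta$. A secondary technical point is the strict increase in Step 3, which needs the density points to be chosen away from $H$ and from $\partial A_0$.
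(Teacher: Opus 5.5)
The paper does not prove this statement itself. It quotes it as the classical spherical isoperimetric inequality and refers to Appendix~I of Milman--Schechtman. Your argument (two-point symmetrization, Hausdorff compactness of $\cl{M}$, invariance under $T_H$, and maximizing $\mu(B\cap A_0)$) is the standard proof behind that citation, so your proposal is correct and follows the intended route.

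A few technical points need adjusting. First, $B\mapsto\mu(B\cap A_0)$ is upper semicontinuous, not continuous; that is still enough to get a maximizer. Second, the two facts you list in Step~2 do not by themselves show that $\cl{M}$ is closed, but your open-neighbourhood fix does. Third, your claim that lower semicontinuity of $B\mapsto\mu(B_\delta)$ fails is actually false for $\delta>0$, because $\{x:d_{\cl{S}}(x,B)=\delta\}$ is $\mu$-null. The usual write-up avoids all of this by fixing $A$ and taking $\cl{M}$ to be the closed $B$ with $\mu(B)=\mu(A)$ and $\mu(B_\eps)\le\mu(A_\eps)$ for every $\eps>0$.
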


If $a=1/2$, then the following corollary is true.

\begin{cor} \label{isoperim}
Let $A$ be a set in the metric space $(\partial\cl{S}^d,d_{\cl{S}},\mu)$ with a geodesic metric $d_{\cl{S}}$, and $\delta\in[0,1]$.
Suppose that $\mu(A)\ge\frac{1}{2}$, then
\begin{equation}
\mu(A_\delta)\ge1-\sqrt{\frac{\pi}{8}}\exp\left(-\frac{\delta^2(d-2)}{2}\right).
\end{equation}
\end{cor}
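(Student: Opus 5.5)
The plan is to derive Corollary~\ref{isoperim} from the isoperimetric theorem just stated, by reducing to a cap computation. Since $\mu(A)\ge\frac12$ and $\mu(A_\delta)$ is monotone in $A$, I may assume $\mu(A)=\frac12$; if $\mu(A)>\frac12$, shrink $A$ to a subset of measure exactly $\frac12$, which is possible because the uniform measure is non-atomic. The theorem then says $\mu(A_\delta)\ge\mu((A_0)_\delta)$, where $A_0$ is a geodesic ball (cap) of measure $\frac12$. Such a cap is a closed hemisphere centred at some $x_0$, so $A_0=\{x: d_{\cl{S}}(x,x_0)\le\pi/2\}$ and $(A_0)_\delta=\{x:d_{\cl{S}}(x,x_0)\le\pi/2+\delta\}$. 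Everything is therefore reduced to bounding the measure of the complementary cap $\{x: d_{\cl{S}}(x,-x_0)<\pi/2-\delta\}$.

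Next I will write this measure in polar coordinates about $x_0$. For the sphere $\partial\cl{S}^d$ of dimension $n:=d-1$, the surface element in the angle $\theta$ from $x_0$ is proportional to $\sin^{n-1}\theta$. Substituting $\theta=\pi/2+t$ gives
\begin{equation*}
1-\mu((A_0)_\delta)=\frac{\int_\delta^{\pi/2}\cos^{n-1}t\,dt}{2\int_0^{\pi/2}\cos^{n-1}t\,dt}.
\end{equation*}
I then use $\cos t\le e^{-t^2/2}$ on $[0,\pi/2]$ in the numerator. Shifting $t=s+\delta$ and using $(s+\delta)^2\ge s^2+\delta^2$ gives
\begin{equation*}
\int_\delta^{\pi/2}\cos^{n-1}t\,dt\le e^{-(n-1)\delta^2/2}\int_0^\infty e^{-(n-1)s^2/2}\,ds=e^{-(n-1)\delta^2/2}\sqrt{\frac{\pi}{2(n-1)}} .
\end{equation*}
Since $n-1=d-2$, this produces exactly the exponent $\delta^2(d-2)/2$ claimed in the corollary.

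The main obstacle is the denominator. I need a lower bound on $I_{n-1}:=\int_0^{\pi/2}\cos^{n-1}t\,dt$ that cancels the $\sqrt{1/(n-1)}$ factor and leaves precisely the constant $\sqrt{\pi/8}$. I will use the Wallis integrals: $I_k I_{k+1}=\pi/(2(k+1))$ and $I_k$ is decreasing in $k$. Together these give $I_{n-1}^2\ge I_{n-1}I_n=\pi/(2n)$, so $I_{n-1}\ge\sqrt{\pi/(2n)}$. The ratio is then at most
\begin{equation*}
\frac12\sqrt{\frac{n}{n-1}}\,e^{-(n-1)\delta^2/2},
\end{equation*}
which stays bounded by a constant of order $1/2$ but is not literally $\sqrt{\pi/8}$ for every $n$. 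If the crude bound does not give $\sqrt{\pi/8}$, I will instead follow the Milman--Schechtman route in~\cite{MS}, Appendix~I. That route compares the normalized cap integral with the Gaussian tail $\sqrt{2/\pi}\int_{\delta\sqrt{n-1}}^\infty e^{-u^2/2}\,du$, and then applies the standard tail bound $\int_x^\infty e^{-u^2/2}\,du\le\sqrt{\pi/2}\,e^{-x^2/2}$; these two steps are what yield the constant $\sqrt{\pi/8}$ for small $n$. Once the constant is settled, combining it with the cap identity completes the proof. Care is needed only at the identification of the dimension $n=d-1$ of $\partial\cl{S}^d$ with the index $d-2$ appearing in the exponent.
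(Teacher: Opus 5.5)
Your reduction to a hemisphere via the isoperimetric theorem, the formula $1-\mu((A_0)_\delta)=\int_\delta^{\pi/2}\cos^{m}t\,dt/(2I_m)$ with $m=n-1=d-2$, and your numerator bound $e^{-m\delta^2/2}\sqrt{\pi/(2m)}$ are exactly the paper's steps. The gap is the denominator. That is where the constant $\sqrt{\pi/8}$ is decided, and you leave it unresolved.

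What is needed is $\sqrt{m}\,I_m\ge 1$ for every $m\ge1$. Then the prefactor is $\sqrt{\pi/(2m)}/(2I_m)\le\frac12\sqrt{\pi/2}=\sqrt{\pi/8}$. The paper obtains this from the two-step recursion $I_m=\frac{m-1}{m}I_{m-2}$. Because $(m-1)^2\ge m(m-2)$, one has $\sqrt{m}\,I_m\ge\sqrt{m-2}\,I_{m-2}$, and hence $\sqrt{m}\,I_m\ge\min(I_1,\sqrt2\,I_2)=\min(1,\pi/(2\sqrt2))=1$.

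Your Wallis-product bound $I_m\ge\sqrt{\pi/(2(m+1))}$ is weaker at small $m$. It gives the prefactor $\frac12\sqrt{(m+1)/m}$, which is $\le\sqrt{\pi/8}$ if and only if $m\ge 2/(\pi-2)$. So it already works for all $d\ge4$, which you did not check. It fails at $m=1$ ($d=3$), where it gives $1/\sqrt2>\sqrt{\pi/8}$. That case is closed by computing $I_1=1$ directly. Your route can therefore be completed in two lines, but as written it is not complete. (For $d=2$ your numerator bound breaks down, but the claim is trivial there: the exponent vanishes and $\mu(A_\delta)\ge\frac12>1-\sqrt{\pi/8}$.)

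Your stated fallback would not rescue the argument as described. Comparing the normalized cap with $\sqrt{2/\pi}\int_{\delta\sqrt m}^\infty e^{-u^2/2}\,du$ and then applying $\int_x^\infty e^{-u^2/2}\,du\le\sqrt{\pi/2}\,e^{-x^2/2}$ yields the constant $\sqrt{2/\pi}\cdot\sqrt{\pi/2}=1$, not $\sqrt{\pi/8}$. Moreover, the comparison step itself is just a lower bound on $\sqrt{m}\,I_m$, which is exactly the ingredient you had not supplied. The factor that actually produces $\sqrt{\pi/8}=\frac12\sqrt{\pi/2}$ is $1/(2\sqrt{m}\,I_m)\le\frac12$.
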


\begin{proof}
Let $l_\theta$ be a set of all points on $\partial\cl{S}^d$ which are of distance $\theta$ from $A_{\frac{\pi}{2}}:=\{A|\mu(A)=\frac{1}{2}\}$ for some fixed $x_0\in\partial\cl{S}^d$. Suppose that $\partial\cl{S}^d$ has a radius 1, then the radius $\cos^{d-2}\theta$ of $l_\theta$ is proportional to $(d-2)$-dimensional volume of $l_\theta$. Since $A_\delta=\int_{-\pi/2}^\delta\cos^{d-2}d\theta/\int_{-\pi/2}^{\pi/2}\cos^{d-2}d\theta$ and hence define $\iota_{d-1}=\int_0^{\pi/2}\cos^{d-2}d\theta$. Let make use of the change of variables $\theta=\frac{1}{\sqrt{d-2}}\phi$, $\phi=\psi+\delta\sqrt{d-2}$ and the inequality $\cos\tau\le\exp(-\frac{\tau^2}{2})$. Then,
\begin{eqnarray*}
1-\mu(A_\delta)
&=&\frac{\int_{\delta\sqrt{d-2}}^{\frac{\pi\sqrt{d-2}}{2}}
\cos^{d-2}\frac{\phi}{\sqrt{d-2}}}
{2\sqrt{d-2}\iota_{d-2}}d\phi \\
&\le&\exp\left(-\frac{\delta^2(d-2)}{2}\right)
\frac{\int_0^{(\frac{\pi}{2}-\delta)\sqrt{d-2}}
\exp\left(-\frac{\psi^2}{2}\right)}{2\sqrt{d-2}\iota_{d-2}}d\psi \\
&\le&\sqrt{\frac{\pi}{2}}\frac{1}
{2\sqrt{d-2}\iota_{d-2}}\exp\left(-\frac{\delta^2(d-2)}{2}\right),
\end{eqnarray*}
where the last inequality is used Gaussian integral formula. Since $\iota_{t}=\frac{t-1}{t}\iota_{t-2}$ and hence $\sqrt{d-2}\iota_{d-2}\ge\sqrt{d-4}\iota_{d-4}\ge\min(\iota_1,\sqrt{2}\iota_2)=1$. Thus, we have
\begin{equation}
1-\mu(A_\delta)\le\sqrt{\frac{\pi}{8}}\exp\left(-\frac{\delta^2(d-2)}{2}\right).
\end{equation}
\end{proof}

Let $f$ be a continuous function $f:\partial\cl{S}^d\to\bb{R}$ defined on a hypersphere $\cl{S}^d$. Let $m_f$ be a median of $f$, so that
\begin{equation}
\mu\{x\in\partial\cl{S}^d |f\le m_f\}\ge\frac{1}{2}
~~\mathrm{and}~~
\mu\{x\in\partial\cl{S}^d |f\ge m_f\}\ge\frac{1}{2}.
\end{equation}

\begin{cor} \label{b.levy} (L\'{e}vy)
Let $f$ be a continuous function $f:\partial\cl{S}^d\to\bb{R}$ on a sphere $\cl{S}^d$. Suppose that $A=\{x\in\partial\cl{S}^d|f(x)\le m_f\}$, then
\begin{equation}
\mu(A_\delta)\ge1-\sqrt{\frac{\pi}{8}}\exp
\left(-\frac{\delta^2(d-2)}{2}\right).
\end{equation}
\end{cor}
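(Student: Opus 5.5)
The statement to prove is Corollary~\ref{b.levy}. The plan is to derive it directly from Corollary~\ref{isoperim}. The only thing to check is that the sublevel set $A=\{x\in\partial\cl{S}^d \mid f(x)\le m_f\}$ meets that corollary's hypotheses. Once that is done, the desired lower bound on $\mu(A_\delta)$ is exactly the conclusion of Corollary~\ref{isoperim} applied to this $A$.

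\textbf{Step 1: $A$ is an admissible set.} Since $f$ is continuous, $A=f^{-1}((-\infty,m_f])$ is closed in the compact space $\partial\cl{S}^d$. Hence $A$ is compact and Borel, and $\mu(A)$ is defined. The same continuity makes $A_\delta=\{x \mid d_{\cl{S}}(x,A)\le\delta\}$ closed, so $\mu(A_\delta)$ is defined as well.

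\textbf{Step 2: the measure condition.} By the definition of a median recorded just before the statement, $\mu\{f\le m_f\}\ge\frac12$, so $\mu(A)\ge\frac12$. This is precisely the hypothesis of Corollary~\ref{isoperim}.

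\textbf{Step 3: apply the isoperimetric corollary.} Corollary~\ref{isoperim} is stated for $\mu(A)\ge\frac12$, not only for $\mu(A)=\frac12$. If one prefers to reduce to the equality case, the reduction is a monotonicity argument. Choose a subset $A'\subseteq A$ with $\mu(A')=\frac12$; such a set exists, for instance an intersection of $A$ with a geodesic cap whose radius is tuned continuously, using the intermediate value theorem and the fact that $\mu$ has no atoms. Then $A'_\delta\subseteq A_\delta$, so $\mu(A_\delta)\ge\mu(A'_\delta)$. The isoperimetric theorem, with the extremal cap $A_0$ of measure $\frac12$ (a hemisphere), bounds $\mu(A'_\delta)$ from below by the computation already carried out in the proof of Corollary~\ref{isoperim}. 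This yields
\[
\mu(A_\delta)\ge 1-\sqrt{\frac{\pi}{8}}\exp\left(-\frac{\delta^2(d-2)}{2}\right).
\]

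The only potential obstacle is a measurability or extremality subtlety: whether the minimum in the isoperimetric theorem is attained at a cap for an arbitrary closed set of measure at least $\frac12$. This is handled by the monotonicity reduction in Step 3, since enlarging $A$ can only enlarge $A_\delta$. Everything else is bookkeeping. The symmetric statement for $\{f\ge m_f\}$ follows by applying the same argument to $-f$, whose median is $-m_f$. Combining the two bounds gives the two-sided tail estimate for Lipschitz $f$, in the same way as in Theorem~\ref{1lipschitz}.
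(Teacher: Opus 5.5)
Your proposal is correct and follows the paper's proof: the paper also notes that the median sublevel set $A_-=\{f\le m_f\}$ (and likewise $A_+=\{f\ge m_f\}$) has measure at least $\tfrac12$, and then applies Corollary~\ref{isoperim} directly. Your measurability check and the reduction to a subset of measure exactly $\tfrac12$ do no harm but are not needed, since Corollary~\ref{isoperim} is already stated for $\mu(A)\ge\tfrac12$.
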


\begin{widetext}
\begin{proof}
For the two median half-spaces $A_-=\{x:f(x)\le m_f\}$ and $A_+=\{x:f(x)\ge m_f\}$, Corollary~\ref{isoperim} gives the corresponding upper and lower tail estimates. Thus,
\begin{equation}
\mu((A_-)_\delta)\ge1-\sqrt{\frac{\pi}{8}}\exp\left(-\frac{\delta^2(d-2)}{2}\right)
~~~~\mathrm{and}~~~~
\mu((A_+)_\delta)\ge1-\sqrt{\frac{\pi}{8}}\exp\left(-\frac{\delta^2(d-2)}{2}\right).
\end{equation}
\end{proof}
\end{widetext}

\section{Proof of main results} \label{sec.main}
In this section, two kinds of proof of Theorem~\ref{antipodal} will be presented: The concentration of 1-Lipschitz function and the isoperimetric inequality.

Let $\cl{S}^d$ be a $d$-dimensional hypersphere and $\cl{E}^{d-1}\subset\cl{S}^d$ be a $d-1$-dimensional hyperdisk through the origin. (Note that $\partial\cl{E}^{d-1}:=E^{d-2}$.) Let $A$ be a nonempty compact subset of $\partial\cl{E}^{d-1}$. Suppose that $A':=\{\gamma' x|x\in A, \gamma'\in[0,1]\}$ in $\cl{E}^{d-1}$, then
\begin{equation}
\pr[A]=\mu(A'):=\frac{\vol(A')}{\vol(\cl{E}^{d-1})}.
\end{equation}
Let $B$ be a set of $\partial\cl{E}^{d-1}\backslash(A\cup A_\eps)$, where $A_\eps$ denotes the $\eps$-neighborhood of $A$ such that the set of all $x\in\partial\cl{E}^{d-1}$ whose Euclidean distance $\|\cdot\|$ to $A$ is at most $\eps$. Suppose that $\eps\in[0,1]$ and $\pr[A]\ge1/2$, then, for all $a\in A$ and $b\in B$, $\|a-b\|\ge\eps$. Let $A'$ and $B'$ are subsets in $\cl{E}^{d-1}$. Then, by Lemma~\ref{euclid.dist}, we also obtain $\left\|\frac{x'+y'}{2}\right\|\le1-\frac{\eps^2}{8}$ for any $x'\in A'$ and $y'\in B'$.

\begin{prop} \label{MC.equator}
Let $A\subseteq\partial\cl{E}^{d-1}$ be a measurable set with $\pr[A]\ge\frac{1}{2}$. Suppose that $A_\eps$ denotes the $\eps$-neighborhood of $A$, i.e., the set of all $x\in\cl{E}^{d-1}$ whose Euclidean distance to $A$ is at most $\eps$. Then
\begin{equation} \label{eq.MC.equator}
1-\pr[A_\eps]\le2\exp\left(-\frac{\eps^2{(d-1)}}{4}\right).
\end{equation}
\end{prop}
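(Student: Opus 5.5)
The plan is to repeat the Brunn--Minkowski argument of Theorem~\ref{measureconcen}, now inside the $(d-1)$-dimensional hyperplane $H_{d-1}^0$ that contains the hyperdisk $\cl{E}^{d-1}$. The measure is normalized $(d-1)$-dimensional Lebesgue volume on $\cl{E}^{d-1}$, and the boundary measure is transported through the cone construction $A\mapsto A'$ introduced above. First, identify $H_{d-1}^0$ with $\bb{R}^{d-1}$ by an isometry. Under this identification $\cl{E}^{d-1}$ is the closed unit ball of $\bb{R}^{d-1}$ and $E^{d-2}=\partial\cl{E}^{d-1}$ is its unit sphere. All sets involved ($A$, $A'$, $B$, $B'$) lie in this hyperplane, so every volume below is a $(d-1)$-dimensional volume. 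This is the only place where the dimension parameter changes from $d$ to $d-1$.

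Next, set $B:=\partial\cl{E}^{d-1}\setminus A_\eps$, so that $\pr[B]=1-\pr[A_\eps]$ and $\|a-b\|\ge\eps$ for all $a\in A$, $b\in B$. Form the cones $A'$ and $B'$. Since $\cl{E}^{d-1}$ is convex and closed under the radial scaling, these lie in $\cl{E}^{d-1}$, and $\pr[A]=\mu(A')$, $\pr[B]=\mu(B')$. The pointwise bound $\|(x'+y')/2\|\le 1-\eps^2/8$ for $x'\in A'$, $y'\in B'$ follows exactly as in Lemma~\ref{euclid.dist}: that proof uses only the Euclidean norm and the relation $\|a-b\|\ge\eps$ on the unit sphere, and both are available in $\bb{R}^{d-1}$. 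Hence $\frac12(A'+B')$ is contained in the ball of radius $1-\eps^2/8$ in $\bb{R}^{d-1}$. That ball has normalized measure $(1-\eps^2/8)^{d-1}$ by homogeneity of $(d-1)$-dimensional volume.

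Then apply Lemma~\ref{BMineq} in $\bb{R}^{d-1}$, which is dimension-free in form, to obtain
\[
\Bigl(1-\frac{\eps^2}{8}\Bigr)^{d-1}\ge\mu\Bigl(\tfrac12(A'+B')\Bigr)\ge\sqrt{\mu(A')\mu(B')}\ge\sqrt{\tfrac12\,\pr[B]} .
\]
Squaring gives $\pr[B]\le 2(1-\eps^2/8)^{2(d-1)}$. The inequality $1-t\le e^{-t}$ then yields $\pr[B]\le 2\exp(-\eps^2(d-1)/4)$, which is \eqref{eq.MC.equator}.

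I expect the main obstacle to be bookkeeping rather than substance. One must check that the neighborhood $A_\eps$ is taken inside $\partial\cl{E}^{d-1}$, so that $B$ is a subset of the hyperequator, and that the measure is the intrinsic $(d-1)$-dimensional one, not the ambient $d$-dimensional volume (the latter assigns $\cl{E}^{d-1}$ measure zero). Once the identification with $\bb{R}^{d-1}$ is fixed, the argument is a verbatim transcription of Theorem~\ref{measureconcen} with $d$ replaced by $d-1$.
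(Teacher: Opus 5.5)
Your proposal is correct and follows the paper's proof essentially step for step. The steps are: the cone construction $A\mapsto A'$, Lemma~\ref{euclid.dist} to place $\frac12(A'+B')$ in the ball of radius $1-\eps^2/8$, the multiplicative Brunn--Minkowski inequality with $(d-1)$-dimensional volume, and finally $1-t\le e^{-t}$. Your explicit identification of $H_{d-1}^0$ with $\bb{R}^{d-1}$, and your note that $\pr[A_\eps]$ must be read on the hyperequator with its intrinsic measure, spell out bookkeeping that the paper leaves implicit.
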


\begin{proof}
By above state (or Lemma~\ref{euclid.dist}), the set $\frac{A'+B'}{2}$ is contained in the hyperdisk of radius $1-\frac{\eps^2}{8}$. By the Brun-Minkowski inequality; Lemma~\ref{BMineq}, and $\pr[A]\ge\frac{1}{2}$, 
\begin{eqnarray}
\left(1-\frac{\eps^2}{8}\right)^{d-1}
&\ge&\mu\left(\frac{1}{2}(A'+B')\right)\ge\sqrt{\mu(A')\mu(B')} \nonumber\\
&=&\sqrt{\pr[A]\pr[B]}\ge\sqrt{\frac{1}{2}\pr[B]}.
\end{eqnarray}
Thus, $\pr[B]\le2\left(1-\frac{\eps^2}{8}\right)^{2(d-1)}\le2\exp\left(-\frac{\eps^{2}(d-1)}{4}\right)$.
\end{proof}

We now prove our main theorem on the $d-2$-dimensional hyperequator $\partial\cl{E}^{d-1}$ via concentration of Lipschitz functions. Let $f:\partial\cl{E}^{d-1}\to\bb{R}$ be a $\kappa$-Lipschitz function and let $m_f$ be a median of $f$. Then we have

\begin{theorem} \label{antipodal'}
Let $f:E^{d-2}\to \bb{R}$ be a $\kappa$-Lipschitz function on $E^{d-2}=\partial\cl{E}^{d-1}$. For every $\eps>0$, if a point $x\in E^{d-2}$ is chosen uniformly at random, then
\begin{equation} \label{eq.antipodal'}
\pr\left[|f(x)-m_f|\ge\eps\right]\le C_1\exp\left(\frac{-c(d-1)\eps^2}{\kappa^2}\right),
\end{equation}
where $m_f$ is a median of $f$ and $C_1,c>0$ are universal constants.
\end{theorem}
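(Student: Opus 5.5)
The plan is to mirror the proof of Theorem~\ref{1lipschitz}, with the sphere $\partial\cl{S}^d$ replaced by the hyperequator $E^{d-2}=\partial\cl{E}^{d-1}$ and Theorem~\ref{measureconcen} replaced by Proposition~\ref{MC.equator}. First I will reduce to the case $\kappa=1$. If $f$ is $\kappa$-Lipschitz, then $g:=f/\kappa$ is $1$-Lipschitz and $m_g=m_f/\kappa$ is a median of $g$. Moreover, $\{|f-m_f|\ge\eps\}=\{|g-m_g|\ge\eps/\kappa\}$, so it suffices to bound $\pr[|g-m_g|\ge t]$ with $t=\eps/\kappa$.

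Next I introduce the two median sets
\[
A_-:=\{x\in E^{d-2}: g(x)\le m_g\},\qquad A_+:=\{x\in E^{d-2}: g(x)\ge m_g\}.
\]
By the definition of the median, $\pr[A_\pm]\ge\frac12$. Both sets are closed, since $g$ is continuous, and they are nonempty compact subsets of $\partial\cl{E}^{d-1}$. This is exactly the setting of Proposition~\ref{MC.equator}.

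By the Lipschitz property, any $x$ with $g(x)>m_g+t$ lies at Euclidean distance more than $t$ from $A_-$. Hence $\{g>m_g+t\}\subseteq E^{d-2}\setminus (A_-)_t$, and likewise $\{g<m_g-t\}\subseteq E^{d-2}\setminus(A_+)_t$. Applying Proposition~\ref{MC.equator} to each set and taking a union bound gives, for $t\in(0,1]$,
\[
\pr[|g-m_g|>t]\le 4\exp\!\left(-\frac{t^2(d-1)}{4}\right).
\]
To pass to the non-strict inequality $\ge t$ in the statement, I apply the same bound at $t'=t/2$, or let $t'\uparrow t$. Either way, only the universal constant $c$ changes.

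It remains to remove the restriction $t\le1$, which comes from Lemma~\ref{euclid.dist} and Proposition~\ref{MC.equator}. The Euclidean diameter of $E^{d-2}$ is $2$, so $|g-m_g|\le 2$ everywhere, since $m_g$ lies in the range of $g$ and $g$ is 1-Lipschitz. Therefore the probability vanishes for $t>2$. For $t\in(1,2]$, I bound $\pr[|g-m_g|\ge t]\le\pr[|g-m_g|\ge 1]\le 4e^{-(d-1)/4}\le 4e^{-(d-1)t^2/16}$. Taking $C_1=4$ and $c=1/16$ then covers all $t>0$. Substituting back $t=\eps/\kappa$ yields \eqref{eq.antipodal'}.

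The main obstacle, as I see it, is not the Lipschitz-to-set reduction but the legitimacy of Proposition~\ref{MC.equator} itself. It requires that the cone construction $A'$ inside the $(d-1)$-dimensional disk $\cl{E}^{d-1}$ reproduces the uniform measure on $E^{d-2}$. It also requires that Brunn--Minkowski be applied in dimension $d-1$, which is where the factor $(1-\eps^2/8)^{d-1}$ comes from. Both points hold because $\cl{E}^{d-1}$ is a Euclidean unit ball in the hyperplane $H^0_{d-1}\cong\bb{R}^{d-1}$. I would state this identification explicitly and note that Lemma~\ref{euclid.dist} only uses the norm in that hyperplane.

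As an alternative, the isoperimetric route applies Corollary~\ref{isoperim} on the sphere $E^{d-2}\cong\partial\cl{S}^{d-1}$. This gives an exponent of order $(d-3)\delta^2/2$ in geodesic distance. Since Euclidean distance is at most geodesic distance, Lipschitz in the Euclidean sense implies Lipschitz in the geodesic sense, and $(d-3)\ge (d-1)/3$ for $d\ge4$, so this route also yields the claimed form after adjusting constants.
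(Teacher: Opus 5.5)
Your proposal is correct and follows the paper's proof: median sets $A_\pm$, the observation that $\{f>m_f+\eps\}$ misses $(A_-)_{\eps/\kappa}$ (and symmetrically for $A_+$), then Proposition~\ref{MC.equator} applied twice with a union bound to get $4\exp\left(-(d-1)\eps^2/(4\kappa^2)\right)$. Your extra steps, namely passing from $>$ to $\ge$ and covering $\eps/\kappa>1$ via the diameter bound $|f-m_f|\le 2\kappa$, supply details the paper's proof does not address, since Proposition~\ref{MC.equator} is only stated for neighborhood radius in $[0,1]$.
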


\begin{proof}
Suppose that $A_-=\{x\in\partial\cl{E}^{d-1}|f(x)\le m_f\}$ and $A_+=\{x\in\partial\cl{E}^{d-1}|f(x)\ge m_f\}$. Then $\pr[A_-]\ge\frac{1}{2}$ and $\pr[A_+]\ge\frac{1}{2}$. If $x\in (A_-)_{\eps/\kappa}$, then $f(x)\le m_f+\eps$; similarly, if $x\in (A_+)_{\eps/\kappa}$, then $f(x)\ge m_f-\eps$. By Proposition~\ref{MC.equator},
\begin{equation}
\pr[|f(x)-m_f|\ge\eps]\le4\exp\left(-\frac{(d-1)\eps^2}{4\kappa^2}\right),
\end{equation}
which has the asserted form after renaming the universal constants.
\end{proof}

Above theorem also can be proved by (classical) isoperimetric inequality as in Section~\ref{sec.levy}. Let $(\cl{M},d_{\cl{M}},\mu)$ be a compact metric space $(\cl{M},d_{\cl{M}})$ with a uniform probability measure $\mu$. Let $A$ be a set in the metric space
$(\partial\cl{E}^{d-1},d_{\partial\cl{E}^{d-1}})$ with a {\it geodesic} metric $d_{\cl{E}}(:=d_{\partial\cl{E}^{d-1}})$ on $\partial\cl{E}^{d-1}$, and $\delta\in[0,1]$. We define a subset $A_\delta:=\{x|d_{\cl{E}}(x,A)\le\delta\}$ from $A$. Then, for some fixed $x_0\in\partial\cl{E}^{d-1}$, $A_0=\{x|d_{\cl{E}}(x,x_0)\le r\}$ with $\mu(A_0)=a$. Then an isoperimetric inequality states that let $A$ be a set in the metric space $(\partial\cl{E}^{d-1},d_{\cl{E}},\mu)$ with a geodesic metric $d_{\cl{E}}$, and $\delta\in[0,1]$. For each $a\in[0,1]$, there exists $\mu_\delta=\min\{\mu(A_\delta)|A\subseteq\partial\cl{E}^{d-1}, \mu(A)=a\}$, and it is obtained at $A_0$.

Suppose that $a=1/2$, then, by Corollary~\ref{isoperim}, we have following statements: Let $A$ be a set in the metric space $(E^{d-2},d_{E},\mu)$ with a geodesic metric $d_{E}$, and $\delta\in[0,1]$. Suppose that $\mu(A)\ge\frac{1}{2}$, then
\begin{equation} \label{eq.isoperim'}
\mu(A_\delta)\ge1-\sqrt{\frac{\pi}{8}}\exp\left(-\frac{\delta^2(d-1)}{2}\right).
\end{equation}

Let $f$ be a continuous function $f:E^{d-2}\to\bb{R}$ defined on a hyperdisk $\cl{E}^{d-1}$. Let $m_f$ be a median of $f$, so that
\begin{eqnarray}
&\mu\{x\in E^{d-2} |f\le m_f\}\ge\frac{1}{2} \nonumber\\
&~~~~\mathrm{and}~~~~
\mu\{x\in E^{d-2} |f\ge m_f\}\ge\frac{1}{2}.
\end{eqnarray}

\begin{prop} \label{c.levy} (L\'{e}vy)
Let $f$ be a continuous function $f:E^{d-2}\to\bb{R}$ on a fixed hyperdisk $\cl{E}^{d-1}$. Suppose that $A=\{x\in E^{d-2}|f(x)\le m_f\}$, then
\begin{equation}
\mu(A_\delta)\ge1-\sqrt{\frac{\pi}{8}}\exp\left(-\frac{\delta^2(d-1)}{2}\right).
\end{equation}
\end{prop}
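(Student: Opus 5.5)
The plan is to mirror the proof of Corollary~\ref{b.levy}, with $E^{d-2}=\partial\cl{E}^{d-1}$ in place of $\partial\cl{S}^d$ and the isoperimetric estimate (\ref{eq.isoperim'}) in place of Corollary~\ref{isoperim}. The only input needed is that $A=\{x\in E^{d-2}\,|\,f(x)\le m_f\}$ has measure at least $\frac12$. This holds by the definition of a median displayed just before the statement: $\mu\{f\le m_f\}\ge\frac12$. Continuity of $f$ makes $A$ closed, hence measurable and compact, so the isoperimetric theorem on $(E^{d-2},d_E,\mu)$ applies to it.

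The key steps, in order, are as follows.

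First, invoke the isoperimetric statement for the great sphere $E^{d-2}$ with the geodesic metric $d_E$. Among all sets of measure $a=\frac12$, the $\delta$-neighborhood has minimal measure when the set is the geodesic cap $A_0$, which here is a hemisphere of $E^{d-2}$. Since $A_\delta$ grows with $A$, the case $\mu(A)\ge\frac12$ reduces to $\mu(A)=\frac12$: shrink $A$ to a subset of measure exactly $\frac12$, which is possible because $\mu$ is non-atomic.

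Second, compute $1-\mu((A_0)_\delta)$ exactly as in the proof of Corollary~\ref{isoperim}. Parametrize by the signed geodesic distance $\theta$ from the boundary of the hemisphere. The level sets then have measure proportional to $\cos^{n-1}\theta$, where $n=\dim E^{d-2}$. Substitute $\theta=\phi/\sqrt{n-1}$, use the bound $\cos\tau\le e^{-\tau^2/2}$, and shift the variable by $\delta\sqrt{n-1}$. Bound the remaining integral by the Gaussian integral, and use the recursion for $\iota_t$ to get $\sqrt{t}\,\iota_t\ge1$. This gives the prefactor $\sqrt{\pi/8}$ together with the exponential factor.

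Third, read off the claimed inequality $\mu(A_\delta)\ge 1-\sqrt{\pi/8}\,\exp(-\delta^2(d-1)/2)$, which is precisely (\ref{eq.isoperim'}).

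The main obstacle is the dimension bookkeeping in the exponent. The manifold $E^{d-2}$ sits inside $\partial\cl{S}^d$, and the paper itself is not consistent about dimensions, so the honest exponent from the $\cos^{n-1}$ computation depends on which intrinsic dimension is used. If the natural count produces a smaller parameter than $d-1$, I would first check it against the convention already used for $\partial\cl{S}^d$ in Corollary~\ref{isoperim}, where $\cos^{d-2}$ yields $d-2$. Following the paper's convention literally, $E^{d-2}$ plays the role of a sphere one dimension higher, which gives $d-1$. If this does not match, the fallback is to absorb the discrepancy into the constant. Since $\delta\le1$, replacing $d-1$ by $d-1-k$ costs only a factor $e^{k/2}$. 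The statement should then be stated with a universal constant in front instead of $\sqrt{\pi/8}$, consistent with how Theorem~\ref{antipodal'} is phrased.
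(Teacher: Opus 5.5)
Your reduction (median $\Rightarrow$ $\mu(A)\ge\frac12$ $\Rightarrow$ isoperimetric tail bound on $E^{d-2}$) is the paper's proof, which consists only of applying (\ref{eq.isoperim'}) to $A_\pm$. The gap is in your Step~2 and the passage to Step~3. $E^{d-2}=\partial\cl{E}^{d-1}$ is the unit sphere of $\bb{R}^{d-1}$, which is a $(d-2)$-dimensional sphere. The level sets at geodesic distance $\theta$ from the boundary of a hemisphere are therefore $(d-3)$-spheres of radius $\cos\theta$, so the density is $\cos^{d-3}\theta$. The computation you describe then yields
\[
\mu(A_\delta)\ge 1-\sqrt{\frac{\pi}{8}}\exp\left(-\frac{\delta^2(d-3)}{2}\right),
\]
which is Corollary~\ref{isoperim} with $d$ replaced by $d-1$, not (\ref{eq.isoperim'}).

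Your remark that, under the paper's convention, $E^{d-2}$ ``plays the role of a sphere one dimension higher'' is backwards. In Corollary~\ref{isoperim} the $(d-1)$-sphere $\partial\cl{S}^d$ produces $\cos^{d-2}$ and exponent $d-2$. The paper itself also uses the parameter $d-3$ on $E^{d-2}$ in Proposition~\ref{prop.arcanti}. So Step~3 does not follow from Step~2. Your fallback gives only $\mu(A_\delta)\ge1-e\sqrt{\pi/8}\,\exp(-\delta^2(d-1)/2)$, which is a weaker inequality than the one stated, as you concede. The paper's proof does not close this hole either: (\ref{eq.isoperim'}) is simply asserted ``by Corollary~\ref{isoperim}'', and that corollary supports $d-3$, not $d-1$.

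If you want the exponent $d-1$ with an honest proof, use the ambient dimension of $\bb{R}^{d-1}$ instead of the intrinsic one.
\begin{itemize}
\item After the isoperimetric reduction to the hemisphere $\{x_1\le 0\}$, the complement of its geodesic $\delta$-neighbourhood is the cap $\{x\in E^{d-2}:x_1>\sin\delta\}$.
\item The cap's normalized measure equals the normalized volume of the sector $\{tx: t\in[0,1],\ \|x\|=1,\ x_1>\sin\delta\}\subset\cl{E}^{d-1}$. This is the paper's own $A'$ device.
\item For $\delta\le\pi/4$ this sector lies in the ball of radius $\cos\delta$ about $\sin\delta\, e_1$. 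For $\pi/4\le\delta\le1$ it lies in the ball of radius $1/(2\sin\delta)$ about $e_1/(2\sin\delta)$.
\item Since $\cos\delta\le e^{-\delta^2/2}$, and $\log(2\sin\delta)\ge\delta^2/2$ on $[\pi/4,1]$ (a concave function positive at both endpoints), this gives $1-\mu(A_\delta)\le e^{-(d-1)\delta^2/2}$ for all $\delta\in[0,1]$.
\end{itemize}
This has the stated exponent but prefactor $1$ rather than $\sqrt{\pi/8}$. Getting both at once needs a sharper tail estimate than either the Milman--Schechtman computation or this cone bound provides.
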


\begin{widetext}
\begin{proof}
For the two median half-spaces $A_-=\{x:f(x)\le m_f\}$ and $A_+=\{x:f(x)\ge m_f\}$, Eq.~(\ref{eq.isoperim'}) gives
\begin{equation}
\mu((A_-)_\delta)\ge1-\sqrt{\frac{\pi}{8}}\exp\left(-\frac{\delta^2(d-1)}{2}\right)
~~~~\mathrm{and}~~~~
\mu((A_+)_\delta)\ge1-\sqrt{\frac{\pi}{8}}\exp\left(-\frac{\delta^2(d-1)}{2}\right).
\end{equation}
This completes the proof.
\end{proof}
\end{widetext}

Note that Proposition~\ref{c.levy} gives the same L\'{e}vy-type form as Theorem~\ref{antipodal'}. Finally, we record the precise sense in which the antipodal great subsphere is the last set in the coordinate localization scheme considered here.

\begin{remark} \label{finalCM}
The statement is not a nonexistence theorem for subsets of a sphere. Rather, once $\cl{A}^{d-3}=E^{d-2}\cap H_{d-2}^0$ has been fixed, any further concentration statement must specify an additional metric-measure space and a new concentration function. Thus the present argument identifies a terminal object only for the chosen sequence of coordinate neighborhoods.
\end{remark}

\section{Concentration of measure for hypersphere} \label{sec.CM}
This section summarizes concentration estimates for the unit ball and its boundary sphere, following the standard presentation of high-dimensional geometry in Hopcroft and Kannan~\cite{HK}. Let $\cl{S}^d$ be the closed $d$-dimensional unit ball in $\bb{R}^d$ and let $\partial\cl{S}^d$ be its boundary sphere.

\begin{lemma} \label{areavol}
Let $\cl{S}^d$ be the $d$-dimensional unit ball. The boundary area and the volume of the ball are given by
\begin{equation}
\area(\partial\cl{S}^d)=\frac{2\pi^{\frac{d}{2}}}{\Gamma\left(\frac{d}{2}\right)}
~~~~\tn{and}~~~~
\vol(\cl{S}^d)=\frac{\pi^{\frac{d}{2}}}{\frac{d}{2}\Gamma\left(\frac{d}{2}\right)}.
\end{equation}
\end{lemma}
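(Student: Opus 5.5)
We plan to prove Lemma~\ref{areavol} with the standard Gaussian-integral argument, and then obtain the volume from the area by integrating in polar coordinates. Write $A(d):=\area(\partial\cl{S}^d)$ for the surface area of the unit sphere in $\bb{R}^d$. By homogeneity, the sphere of radius $r$ has area $A(d)r^{d-1}$.

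First we compute $A(d)$. Consider
\begin{equation*}
I_d:=\int_{\bb{R}^d}e^{-\|x\|^2}\,dx .
\end{equation*}
By Fubini's theorem the integrand factorizes over coordinates, so $I_d=\left(\int_{\bb{R}}e^{-t^2}dt\right)^d=\pi^{d/2}$, using the one-dimensional Gaussian integral. Alternatively, we integrate in spherical coordinates, $dx=r^{d-1}\,dr\,d\sigma$ with $\sigma$ the surface measure on $\partial\cl{S}^d$. This gives
\begin{equation*}
I_d=A(d)\int_0^\infty e^{-r^2}r^{d-1}\,dr .
\end{equation*}
The substitution $s=r^2$ turns the radial integral into $\frac12\int_0^\infty e^{-s}s^{d/2-1}ds=\frac12\Gamma\left(\frac d2\right)$. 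Equating the two expressions for $I_d$ yields $A(d)=2\pi^{d/2}/\Gamma(d/2)$.

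Second, for the volume we slice the ball into concentric shells:
\begin{equation*}
\vol(\cl{S}^d)=\int_0^1A(d)r^{d-1}\,dr=\frac{A(d)}{d}=\frac{2\pi^{d/2}}{d\,\Gamma(d/2)} .
\end{equation*}
This equals $\pi^{d/2}/\left(\frac d2\Gamma(\frac d2)\right)$, which is the claimed formula (equivalently $\pi^{d/2}/\Gamma(\frac d2+1)$).

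The only step needing care is the justification of the polar-coordinate decomposition $dx=r^{d-1}dr\,d\sigma$. We would take it as the standard coarea/change-of-variables formula for the map $x\mapsto(\|x\|,x/\|x\|)$; its Jacobian is $r^{d-1}$, and the scaling of surface measure under dilation follows from the same fact. With that granted, everything else is routine: the Gaussian integral and the definition of the Gamma function. No earlier results of the paper are needed.
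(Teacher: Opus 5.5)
Your proof is correct and follows essentially the same route as the paper: the paper also evaluates $\pi^{d/2}$ as a product of one-dimensional Gaussian integrals, rewrites it in polar coordinates as $\frac{1}{2}\area(\partial\cl{S}^d)\Gamma(d/2)$ via the substitution $t=r^2$, and obtains the volume as $\area(\partial\cl{S}^d)/d$ by integrating $r^{d-1}$ over $[0,1]$. Your explicit note that the polar-coordinate decomposition has Jacobian $r^{d-1}$ is a small justification that the paper leaves implicit.
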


\begin{proof}
The volume of the unit ball $\cl{S}^d$ is given by
\begin{eqnarray}
\vol(\cl{S}^d)
&=&\int_{\partial\cl{S}^d}\int_{r=0}^1r^{d-1}d\Omega dr \nonumber\\
&=&\frac{1}{d}\int_{\partial\cl{S}^d}d\Omega=\frac{\area(\partial\cl{S}^d)}{d}.
\end{eqnarray}
Consider the following Gaussian integral in Cartesian coordinates
\begin{equation}
\iota(\cl{S}^d):=\left[\int_{-\infty}^\infty\exp(-x^2)dx\right]^d=\pi^{\frac{d}{2}}.
\end{equation}
Then, in polar coordinate,
\begin{eqnarray}
\iota(\cl{S}^d)&=&\int_{\partial\cl{S}^d}d\Omega\int_0^\infty\exp(-r^2)r^{d-1}dr \nonumber\\
&=&\frac{\area(\partial\cl{S}^d)}{2}\int_0^\infty\exp(-t)t^{\frac{d}{2}-1}dt \nonumber\\
&=&\frac{\area(\partial\cl{S}^d)}{2}\Gamma\left(\frac{d}{2}\right).
\end{eqnarray}
This completes the proof.
\end{proof}

\begin{cor} \label{VOLarea}
Let $\cl{S}^d$ be a $d$-dimensional hypersphere with unit radius. Suppose that the dimension $d$ is sufficiently large, then
\begin{equation}
\lim_{d\to\infty}\vol(\cl{S}^d)=0.
\end{equation}
\end{cor}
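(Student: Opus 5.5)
The plan is to start from the closed formula in Lemma~\ref{areavol},
\begin{equation*}
\vol(\cl{S}^d)=\frac{\pi^{d/2}}{\frac{d}{2}\Gamma\left(\frac{d}{2}\right)}=\frac{\pi^{d/2}}{\Gamma\left(\frac{d}{2}+1\right)},
\end{equation*}
and show directly that it tends to $0$. The Gamma function grows superexponentially, while the numerator grows only exponentially. The whole argument is therefore one quantitative lower bound on $\Gamma(\frac{d}{2}+1)$.

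First I would rewrite the volume as $a_n:=\pi^{n}/\Gamma(n+1)$ with $n=d/2$. Since $d$ runs over integers, $n$ runs over half-integers, but the bound will work for real $n\ge 1$. Next I would use the elementary estimate $\Gamma(n+1)\ge (n/e)^n$, valid for real $n>0$. This follows from the integral definition $\Gamma(n+1)=\int_0^\infty t^n e^{-t}\,dt$ by restricting to $t\ge n$, or more simply from $e^{t}\ge t^n/\Gamma(n+1)$ evaluated at $t=n$. That inequality is itself the single-term bound $t^n/\Gamma(n+1)\le\sum_k t^k/k!$ in the integer case. For non-integer $n$ I would use log-convexity of $\Gamma$, or just the restriction argument:
\begin{equation*}
\int_n^{\infty} t^n e^{-t}\,dt\ge n^n\int_n^\infty e^{-t}\,dt=n^n e^{-n},
\end{equation*}
which works for any real $n$.

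Combining these gives $\vol(\cl{S}^d)\le (\pi e/n)^{n}=(2\pi e/d)^{d/2}$. Once $d>4\pi e$ the base is at most $1/2$, so the volume is at most $2^{-d/2}\to0$. That proves the limit, and it shows the decay is in fact superexponential.

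The only step needing any care is the lower bound on $\Gamma$ at half-integer arguments, since the paper's formula involves $\Gamma(d/2)$ for odd $d$. The tail-restriction estimate above handles all real $n$ uniformly, so I expect no real obstacle. Alternatively, one could cite Stirling's formula $\Gamma(n+1)\sim\sqrt{2\pi n}\,(n/e)^n$ as a one-line justification.
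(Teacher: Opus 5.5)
Your proof is correct and follows the paper's route: start from the closed form $\vol(\cl{S}^d)=\pi^{d/2}/\Gamma(\frac{d}{2}+1)$ and show that the Gamma function in the denominator dominates. The only difference is that the paper invokes Stirling's asymptotic, whereas your tail-restriction bound $\Gamma(n+1)\ge (n/e)^n$ gives the explicit estimate $\vol(\cl{S}^d)\le(2\pi e/d)^{d/2}$, valid for every $d$ including odd $d$. This is a slightly more self-contained and quantitative version of the same argument.
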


\begin{proof}
By Stirling's formula,
\begin{equation}
\Gamma\left(\frac{d}{2}+1\right)\sim \sqrt{\pi d}\left(\frac{d}{2e}\right)^{d/2}.
\end{equation}
Since $\vol(\cl{S}^d)=\pi^{d/2}/\Gamma(d/2+1)$, the denominator eventually dominates the numerator and $\vol(\cl{S}^d)\to0$.
\end{proof}

The preceding computation is a statement about total Euclidean volume, not by itself a statement that a probability measure equals the boundary. The probabilistic form is the usual thin-shell statement:

\begin{cor} \label{cor.mu1}
Let $\mu_1$ be the uniform probability measure on the unit ball $\cl{S}^d$. For every fixed $\eta>0$,
\begin{equation}
\lim_{d\to\infty}\mu_1\left(\left\{x\in\cl{S}^d:\operatorname{dist}(x,\partial\cl{S}^d)\le\eta\right\}\right)=1.
\end{equation}
\end{cor}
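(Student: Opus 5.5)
The argument is a direct scaling computation. Since $\cl{S}^d$ is the closed unit ball, the distance from $x$ to $\partial\cl{S}^d$ is $1-\|x\|$. The set in question is therefore the shell $\{x: 1-\eta\le\|x\|\le 1\}$, and it suffices to compute the measure of its complement.

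For $0<\eta<1$ the complement is the open ball of radius $1-\eta$ centred at the origin. For $\eta\ge1$ the set is the whole ball and there is nothing to prove. By homogeneity of Lebesgue measure, $\vol((1-\eta)\cl{S}^d)=(1-\eta)^d\vol(\cl{S}^d)$. This can also be read off from the polar-coordinate formula in the proof of Lemma~\ref{areavol}, since $\int_0^{1-\eta}r^{d-1}dr=(1-\eta)^d/d$. Dividing by $\vol(\cl{S}^d)$ gives
\begin{equation*}
\mu_1\left(\left\{x\in\cl{S}^d:\operatorname{dist}(x,\partial\cl{S}^d)\le\eta\right\}\right)=1-(1-\eta)^d\ge 1-e^{-\eta d}.
\end{equation*}
Letting $d\to\infty$ with $\eta$ fixed then yields the limit $1$.

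There is no real obstacle. The one point worth flagging is the contrast with Corollary~\ref{VOLarea}: the result must be obtained from the ratio of volumes, not from the fact that $\vol(\cl{S}^d)\to0$. The whole content lies in the identity $\mu_1(r\cl{S}^d)=r^d$, which follows from the $d$-homogeneity of Lebesgue measure.
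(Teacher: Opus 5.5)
Your argument is correct and matches the paper's: the paper supports this corollary through Lemma~\ref{volannu}, which computes the volume ratio $(1-\eta)^d\le\exp(-\eta d)$ for the inner ball of radius $1-\eta$, so the shell $\{1-\eta\le\|x\|\le 1\}$ has $\mu_1$-measure at least $1-e^{-\eta d}\to 1$. Your identification of the set as this shell, your separate treatment of $\eta\ge 1$, and your remark that Corollary~\ref{VOLarea} alone does not suffice all agree with the paper and are slightly more careful than its presentation.
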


The following lemma describes that the volume is {\it almost all} concentrated at a $d-1$- dimensional hyperdisk of the hypersphere $\cl{S}^d$.

\begin{lemma} \label{voleq}
Let $\cl{S}^d$ be a $d$-dimensional hypersphere with unit radius. Suppose that $C_1>0$. Then the fraction of the volume of the hemisphere above the hyperplane $x_1=\frac{C_1}{\sqrt{d-1}}$ is less than $\frac{2}{C_1}\exp\left(-\frac{C_1^2}{2}\right)$.
\end{lemma}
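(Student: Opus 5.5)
We follow the Hopcroft--Kannan slicing argument. Write the volume of the cap $\{x\in\cl{S}^d : x_1\ge C_1/\sqrt{d-1}\}$ as an integral over slices perpendicular to the $x_1$-axis. The slice at height $x_1$ is a $(d-1)$-dimensional ball of radius $\sqrt{1-x_1^2}$, so
\begin{equation*}
\vol(\text{cap})=\int_{C_1/\sqrt{d-1}}^{1}\left(1-x_1^2\right)^{\frac{d-1}{2}}\vol(\cl{S}^{d-1})\,dx_1 .
\end{equation*}
We bound this from above and compare it with a lower bound for the volume of the hemisphere $\{x_1\ge 0\}$. The ratio of the two bounds is then the fraction claimed in the statement.

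For the upper bound we use $1-x_1^2\le e^{-x_1^2}$, which gives $(1-x_1^2)^{(d-1)/2}\le e^{-(d-1)x_1^2/2}$. On the range of integration we have $x_1\sqrt{d-1}/C_1\ge 1$, so we may insert this factor and extend the upper limit to $\infty$. This makes the integrand an exact derivative:
\begin{equation*}
\vol(\text{cap})\le \vol(\cl{S}^{d-1})\int_{C_1/\sqrt{d-1}}^{\infty}\frac{x_1\sqrt{d-1}}{C_1}e^{-\frac{(d-1)x_1^2}{2}}dx_1=\frac{\vol(\cl{S}^{d-1})}{C_1\sqrt{d-1}}\,e^{-C_1^2/2}.
\end{equation*}

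For the lower bound on the hemisphere, we observe that it contains the cylinder of height $1/\sqrt{d-1}$ whose base is the $(d-1)$-ball of radius $\sqrt{1-\frac{1}{d-1}}$. Hence
\begin{equation*}
\vol(\text{hemisphere})\ge \frac{1}{\sqrt{d-1}}\left(1-\frac{1}{d-1}\right)^{\frac{d-1}{2}}\vol(\cl{S}^{d-1}).
\end{equation*}
By Bernoulli's inequality, $(1-\frac{1}{d-1})^{(d-1)/2}\ge 1-\frac{d-1}{2}\cdot\frac{1}{d-1}=\frac12$. This gives $\vol(\text{hemisphere})\ge \vol(\cl{S}^{d-1})/(2\sqrt{d-1})$.

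Dividing the two bounds, the factors $\vol(\cl{S}^{d-1})$ and $\sqrt{d-1}$ cancel, and the fraction is at most $\frac{2}{C_1}e^{-C_1^2/2}$, as claimed. The statement is read as a fraction of the hemisphere's volume; since the hemisphere is at most the whole ball's volume, this is the natural and weaker normalization.

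The only delicate point is the choice of cylinder height. Taking the height equal to $1/\sqrt{d-1}$, the same scale as the cutoff, is what makes both $\sqrt{d-1}$ factors cancel. It also keeps the Bernoulli constant at $1/2$, which produces exactly the factor $2$ in the bound.
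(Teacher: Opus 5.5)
Your proof is correct and follows the paper's argument essentially step for step: the same slicing integral, the bound $1-x_1^2\le e^{-x_1^2}$ with the inserted factor $x_1\sqrt{d-1}/C_1\ge 1$ to make the tail integral exact, and the same inscribed cylinder of height $1/\sqrt{d-1}$ with Bernoulli's inequality giving the factor $\frac{1}{2}$. Your Bernoulli step, with exponent $\frac{d-1}{2}$ (valid for $d\ge 3$), is stated more cleanly than the paper's intermediate line, which writes $\frac{d-2}{2}$ but arrives at the same bound $\frac{1}{2\sqrt{d-1}}\vol(\cl{S}^{d-1})$.
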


\begin{proof}
Fix the north pole on $x_1$ axis at $x_1=1$, and divide the hypersphere in half by a hyperplane $H_{d-1}^0$ i.e. a hyperdisk.
The disk with dimension $d-1$ is
\begin{equation}
D_0:=\cl{S}^d\cap H_{d-1}^0=\{x||x|\le1, x_1=0\}.
\end{equation}

Consider a $\eps$ neighborhood disk $D_\eps$ at $x_1=\eps$ parallel to $x_1=0$, and the portion of the hypersphere above
$D_\eps$ as $D_\eps^\uparrow:=\{x||x|\le1, x_1\ge\eps\}$. Thus the incremental volume is a disk of width $dx_1$ whose face is a hypersphere of dimension $d-1$ with radius $\sqrt{1-x_1^2}$. So the surface area of the disk is
\begin{equation}
\area(D_{x_1\ge\eps})=(1-x_1^2)^{\frac{d-1}{2}}\vol(\cl{S}^{d-1}).
\end{equation}
By using $1+a\le\exp(a)$, $a\in\bb{R}$, and $\frac{x_1}{\eps}\ge1$,
\begin{eqnarray}\label{eq.diskupper}
\vol(D_\eps^\uparrow)
&=&\vol(\cl{S}^{d-1})\int_\eps^1(1-x_1^2)^{\frac{d-1}{2}}dx_1 \nonumber\\
&\le&\vol(\cl{S}^{d-1})\int_{\eps}^\infty\exp\left(-\frac{d-1}{2}x_1^2\right)dx_1 \nonumber\\
&\le&\vol(\cl{S}^{d-1})\int_{\eps}^\infty\frac{x_1}{\eps}\exp\left(-\frac{d-1}{2}x_1^2\right)dx_1 \nonumber\\
&=&\frac{1}{\eps(d-1)}\exp\left(-\frac{d-1}{2}\eps^2\right)
\vol(\cl{S}^{d-1}).
\end{eqnarray}

Now define a upper cylinder as $C_{x_1}^\uparrow:=\{x||x|\le1, r=\sqrt{1-x_1^2}, h=x_1\}$. Then the lower bound of the volume of the entire upper hemisphere is given by
\begin{eqnarray} \label{eq.disklower}
\vol(D_0^\uparrow)&\ge&\vol(C_{x_1=\frac{1}{\sqrt{d-1}}}) \nonumber\\
&=&\vol(D_{x_1=\frac{1}{\sqrt{d-1}}})\frac{1}{\sqrt{d-1}} \nonumber\\
&=&\frac{1}{\sqrt{d-1}}\vol(\cl{S}^{d-1})\left(1-\frac{1}{d-1}\right)^{\frac{d-1}{2}} \nonumber\\
&\ge&\frac{1}{\sqrt{d-1}}\vol(\cl{S}^{d-1})\left(1-\frac{1}{d-1}\frac{d-2}{2}\right) \nonumber\\
&=&\frac{1}{2\sqrt{d-1}}\vol(\cl{S}^{d-1}),
\end{eqnarray}
where the last inequality follows from $(1-x)^a\ge1-ax$. From the Eqs.~(\ref{eq.diskupper}) and (\ref{eq.disklower}), the fraction of the volume is
\begin{equation}
f_V:=\frac{\vol(D_\eps^\uparrow)}{\vol(D_0^\uparrow)}
\le\frac{2}{\eps\sqrt{d-1}}\exp\left(-\frac{d-1}{2}\eps^2\right).
\end{equation}
Substitute $\eps$ to $\frac{C_1}{\sqrt{d-1}}$, then $f_V\le\frac{2}{C_1}\exp\left(-\frac{C_1^2}{2}\right)$.
\end{proof}

Above Lemma~\ref{voleq} states that most of the volume of a unit ball lies in a slab of width $O(1/\sqrt{d})$ around a fixed central hyperdisk. In neighborhood language:

\begin{cor} \label{cor.mu2}
Let $\mu_2$ be the uniform probability measure on $\cl{S}^d$ and let $\cl{E}^{d-1}=\cl{S}^d\cap H_{d-1}^0$ be a fixed central hyperdisk. For every fixed $\eta>0$,
\begin{equation}
\lim_{d\to\infty}\mu_2\left(\left\{x\in\cl{S}^d:\operatorname{dist}(x,\cl{E}^{d-1})\le\eta\right\}\right)=1.
\end{equation}
\end{cor}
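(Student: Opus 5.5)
The plan is to derive Corollary~\ref{cor.mu2} directly from Lemma~\ref{voleq}, applied on both sides of the hyperdisk $\cl{E}^{d-1}=\cl{S}^d\cap H_{d-1}^0$. Choose coordinates with $H_{d-1}^0=\{x_1=0\}$. For $x$ in the ball, the distance to the central hyperdisk is exactly $|x_1|$: the orthogonal projection $(0,x_2,\dots,x_d)$ has norm at most $\|x\|\le1$, so it lies in $\cl{E}^{d-1}$. Hence the complement of the $\eta$-neighborhood is the two caps
\begin{equation*}
D_\eta^\uparrow\cup D_\eta^\downarrow,\qquad D_\eta^\uparrow=\{x\in\cl{S}^d: x_1\ge\eta\},\quad D_\eta^\downarrow=\{x\in\cl{S}^d: x_1\le-\eta\}.
\end{equation*}
By the reflection $x_1\mapsto -x_1$, these two caps have equal volume. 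It therefore suffices to show $\mu_2(D_\eta^\uparrow)\to0$.

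Next I write $\mu_2(D_\eta^\uparrow)=\tfrac12\,\vol(D_\eta^\uparrow)/\vol(D_0^\uparrow)$, since the upper hemisphere $D_0^\uparrow$ carries half the volume. The ratio is exactly the fraction $f_V$ bounded in the proof of Lemma~\ref{voleq}. That proof yields, for any threshold $\eps>0$,
\begin{equation*}
f_V\le\frac{2}{\eps\sqrt{d-1}}\exp\!\left(-\frac{(d-1)\eps^2}{2}\right).
\end{equation*}
Alternatively, one may take $\eps=\eta$ and note that $\eta=C_1/\sqrt{d-1}$ with $C_1=\eta\sqrt{d-1}\to\infty$. Then the bound $\frac{2}{C_1}e^{-C_1^2/2}$ tends to $0$ as $d\to\infty$ for fixed $\eta>0$.

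Combining the two caps gives
\begin{equation*}
1-\mu_2\bigl(\{\operatorname{dist}(x,\cl{E}^{d-1})\le\eta\}\bigr)\le\frac{2}{\eta\sqrt{d-1}}\,e^{-(d-1)\eta^2/2}\longrightarrow0,
\end{equation*}
which proves the claim.

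The only delicate point is checking that the ingredients of Lemma~\ref{voleq}'s proof are legitimate. First, the estimate $(1-x_1^2)^{(d-1)/2}\le e^{-(d-1)x_1^2/2}$ must be applied on $[\eta,1]$, and extending the range of integration to $\infty$ only increases the bound. Second, the lower bound $\vol(D_0^\uparrow)\ge\frac{1}{2\sqrt{d-1}}\vol(\cl{S}^{d-1})$ comes from the inscribed cylinder. Both estimates are uniform in $\eta$, so there is no real obstacle; the argument is essentially a rephrasing of Lemma~\ref{voleq} in neighborhood language.
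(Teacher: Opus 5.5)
Your proof is correct and follows the paper's route: the paper gets Corollary~\ref{cor.mu2} by restating Lemma~\ref{voleq} ``in neighborhood language.'' You fill in the steps it leaves implicit, namely that $\operatorname{dist}(x,\cl{E}^{d-1})=|x_1|$ on the ball, that the reflection $x_1\mapsto -x_1$ handles the lower cap, and that a factor $\tfrac12$ converts the hemisphere fraction $f_V$ into $\mu_2$-measure, which gives the explicit bound $\frac{2}{\eta\sqrt{d-1}}e^{-(d-1)\eta^2/2}\to 0$. One cosmetic point: the complement of the neighborhood is $\{|x_1|>\eta\}$, which is contained in the closed caps rather than equal to them, but containment is all you need.
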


The following lemma considers a volume of the hypersphere near a narrow hyperannulus.

\begin{lemma} \label{volannu}
Let $\cl{S}^d$ be a $d$-dimensional hypersphere with unit radius. Then, for sufficiently large $d$, the ratio $R$ of the volume of a hypersphere of radius $1-\eta$ to the volume of a unit hypersphere goes to 0:
\begin{equation}
\lim_{d\to\infty}\frac{\vol(\cl{S}_{r=1-\eta}^d)}{\vol(\cl{S}^d)}\to0.
\end{equation}
\end{lemma}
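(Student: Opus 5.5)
The plan is to compute the ratio exactly by scaling and then let $d\to\infty$. The ball of radius $1-\eta$ in $\bb{R}^d$ is the image of the unit ball $\cl{S}^d$ under the linear dilation $x\mapsto(1-\eta)x$. This map has Jacobian determinant $(1-\eta)^d$. The same conclusion follows from Lemma~\ref{areavol}: writing the volume in polar coordinates as $\int_{\partial\cl{S}^d}\int_0^{1-\eta}r^{d-1}\,dr\,d\Omega$ gives $\frac{(1-\eta)^d}{d}\area(\partial\cl{S}^d)$. Either way,
\begin{equation*}
\frac{\vol(\cl{S}^d_{r=1-\eta})}{\vol(\cl{S}^d)}=(1-\eta)^d .
\end{equation*}

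We take a fixed $\eta\in(0,1]$. Using $1-\eta\le e^{-\eta}$ (the inequality $1+a\le e^a$ already used in the proof of Lemma~\ref{voleq}), the ratio is bounded by $e^{-\eta d}$. This bound tends to $0$ as $d\to\infty$, which proves the lemma. The case $\eta\ge1$ is trivial, since the inner ball then degenerates to a point or is empty.

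There is no real obstacle. The only point needing care is to read the statement correctly. Here $\eta$ is held fixed independently of $d$; if $\eta$ were allowed to shrink like $1/d$, the ratio would stay bounded away from zero. The hypersphere $\cl{S}^d_{r=1-\eta}$ means the solid ball of radius $1-\eta$, in line with the paper's convention that $\cl{S}^d$ is the closed unit ball.

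As a by-product, the complement of this ratio is exactly the $\mu_1$-measure of the shell $\{x:\operatorname{dist}(x,\partial\cl{S}^d)\le\eta\}$. This gives Corollary~\ref{cor.mu1} directly, with the explicit rate $1-\mu_1(\cdot)\le e^{-\eta d}$.
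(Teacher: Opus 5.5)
Your proposal is correct and follows the paper's own proof: by scaling, the ratio equals $(1-\eta)^d$, and this is at most $e^{-\eta d}\to 0$ for fixed $\eta>0$. You also justify the scaling identity (via the Jacobian or polar coordinates) and point out that $\eta$ must be independent of $d$; the paper leaves both implicit.
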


\begin{proof}
\begin{equation}
R:=\frac{(1-\eta)^d\vol(\cl{S}^d)}{\vol(\cl{S}^d)}=(1-\eta)^d\le\exp(-\eta d).
\end{equation}
\end{proof}

Let $C$ be a sufficiently large constant. Suppose that $\eta=C/d$. Then all but approximately $\exp(-C)$ of the volume of the unit ball is contained in a thin annulus of width $C/d$ near the boundary. The following lemma is concerned with surface area near the equator, namely the geometric form of L\'{e}vy's theorem.

\begin{lemma} \label{surfeq}
Let $\cl{S}^d$ be the unit ball and let $\partial\cl{S}^d$ be its boundary sphere. For any $C_2>0$, the fraction of the surface area above the plane $x_1=\frac{C_2}{\sqrt{d-2}}$ is bounded by $C\exp(-cC_2^2)$ for universal constants $C,c>0$.
\end{lemma}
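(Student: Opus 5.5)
The plan is to compute the surface-area fraction directly by slicing $\partial\cl{S}^d$ perpendicular to the $x_1$ axis, in the same way Lemma~\ref{voleq} treats the volume. The bound will come from an upper estimate on the cap above $x_1=\eps$ and a lower estimate on the whole upper hemisphere, followed by the substitution $\eps=C_2/\sqrt{d-2}$.

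\emph{Slice formula.} Parametrize the sphere by the polar angle $\theta$ measured from the hyperplane $x_1=0$, so that $x_1=\sin\theta$. The slice at height $\theta$ is a $(d-2)$-sphere of radius $\cos\theta$, so its $(d-2)$-dimensional area is $\cos^{d-2}\theta\cdot\area(\partial\cl{S}^{d-1})$. Hence
\begin{equation*}
\area(\{x\in\partial\cl{S}^d:x_1\ge\eps\})=\area(\partial\cl{S}^{d-1})\int_{\arcsin\eps}^{\pi/2}\cos^{d-2}\theta\,d\theta .
\end{equation*}
The same density, $\cos^{d-2}\theta$ in the arc-length variable, already appears in the proof of Corollary~\ref{isoperim}. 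Working with $\theta$ avoids the Jacobian factor $(1-x_1^2)^{(d-3)/2}/\sqrt{1-x_1^2}$ that appears in the $x_1$ variable.

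\emph{Upper bound on the cap.} Use $\cos\tau\le\exp(-\tau^2/2)$ and $\arcsin\eps\ge\eps$. The cap integral is then at most $\int_\eps^\infty e^{-(d-2)\theta^2/2}\,d\theta$. Inserting $\theta/\eps\ge1$ as in Eq.~(\ref{eq.diskupper}) bounds this by $\frac{1}{\eps(d-2)}e^{-(d-2)\eps^2/2}$. When $\eps$ is not small compared with $1/\sqrt{d-2}$, one can instead use the Gaussian tail bound $\sqrt{\pi/(2(d-2))}\,e^{-(d-2)\eps^2/2}$; this is the version that gives a bound uniform in $C_2$.

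\emph{Lower bound on the hemisphere.} Restrict the integral to $\theta\in[0,1/\sqrt{d-2}]$. On this interval $\cos\theta\ge1-\theta^2/2$, so by Bernoulli's inequality $\cos^{d-2}\theta\ge 1-\frac{d-2}{2}\theta^2\ge\frac12$. The hemisphere therefore carries at least $\frac{1}{2\sqrt{d-2}}\area(\partial\cl{S}^{d-1})$. This mirrors the cylinder estimate Eq.~(\ref{eq.disklower}). Alternatively, one can use $\sqrt{d-2}\,\iota_{d-2}\ge1$ from the proof of Corollary~\ref{isoperim}.

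\emph{Combining.} Dividing the two bounds, the fraction is at most $\sqrt{2\pi}\,e^{-(d-2)\eps^2/2}$, and the factor $\area(\partial\cl{S}^{d-1})$ cancels. Setting $\eps=C_2/\sqrt{d-2}$ gives a fraction at most $\sqrt{2\pi}\,e^{-C_2^2/2}$, so the claim holds with $C=\sqrt{2\pi}$ and $c=1/2$. If $C_2>\sqrt{d-2}$, the cap is empty and the bound holds trivially.

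The only delicate step is the change of variables between the height $x_1$ and the angle $\theta$, together with the Jacobian. This is handled by integrating in $\theta$ from the start and using only the monotonicity $\arcsin\eps\ge\eps$. Everything else is a rerun of Lemma~\ref{voleq} with exponent $d-2$ in place of $d-1$.
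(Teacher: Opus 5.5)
Your argument is correct and is essentially the paper's: the paper also reduces to the one-dimensional law of $x_1$, with density proportional to $(1-t^2)^{(d-3)/2}$, which is exactly your $\cos^{d-2}\theta\,d\theta$ under $t=\sin\theta$; you carry out the tail estimate that the paper only asserts, and obtain the explicit constants $C=\sqrt{2\pi}$, $c=1/2$. One harmless slip in an unused aside: the $x_1$-density is $(1-x_1^2)^{(d-2)/2}/\sqrt{1-x_1^2}=(1-x_1^2)^{(d-3)/2}$, not $(1-x_1^2)^{(d-3)/2}/\sqrt{1-x_1^2}$.
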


\begin{proof}
Fix also the north pole on $x_1$ axis at $x_1=1$. Define
\begin{equation}
\partial D_0:=\partial\cl{S}^d\cap H_{d-1}^0=\{x||x|=1, x_1=0\}=E^{d-2}.
\end{equation}
For a point chosen uniformly from $\partial\cl{S}^d$, the first coordinate has density proportional to $(1-t^2)^{(d-3)/2}$ on $[-1,1]$. Hence, for $0<\eps<1$,
\begin{equation} \label{eq.surfup}
\mu\{x\in\partial\cl{S}^d:x_1\ge\eps\}
\le C\exp\left(-c(d-2)\eps^2\right)
\end{equation}
with universal constants $C,c>0$. Substituting $\eps=C_2/\sqrt{d-2}$ gives the claim.
\end{proof}

Above Lemma~\ref{surfeq} states that most of the area of the boundary $\partial\cl{S}^d$ of radius 1 lies in a small neighborhood of an arbitrary hyperequator $E^{d-2}$. So, we can assert

\begin{cor} \label{cor.mu3}
Let $\mu_3$ be the uniform probability measure on $\partial\cl{S}^d$, and let $E^{d-2}:=\partial\cl{E}^{d-1}$ be a fixed hyperequator. For every fixed $\eta>0$,
\begin{equation}
\lim_{d\to\infty}\mu_3\left(\left\{x\in\partial\cl{S}^d:\operatorname{dist}(x,E^{d-2})\le\eta\right\}\right)=1.
\end{equation}
\end{cor}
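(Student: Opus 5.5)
The statement is Corollary~\ref{cor.mu3}: for fixed $\eta>0$, the $\eta$-neighborhood of a fixed hyperequator $E^{d-2}$ carries almost all of $\mu_3$ as $d\to\infty$. The plan is to obtain it directly from the tail estimate in Lemma~\ref{surfeq}. Apply that bound on both sides of the equator by symmetry.

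First, fix coordinates so that $E^{d-2}=\partial\cl{S}^d\cap\{x_1=0\}$. This is allowed because $\mu_3$ is rotation invariant, so the choice of hyperequator does not matter. Next, relate the distance to $E^{d-2}$ to the coordinate $|x_1|$. For $x\in\partial\cl{S}^d$ with $|x_1|<1$, the point $y=(0,x_2,\dots,x_d)/\sqrt{1-x_1^2}$ lies on $E^{d-2}$. A direct computation gives
\begin{equation*}
\|x-y\|^2=2-2\sqrt{1-x_1^2}\le 2x_1^2 .
\end{equation*}
Hence $\operatorname{dist}(x,E^{d-2})\le\sqrt{2}\,|x_1|$. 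The geodesic distance is $\arcsin|x_1|\le\frac{\pi}{2}|x_1|$, so the argument works with either metric. The only cost is a constant factor. In particular,
\begin{equation*}
\{x:\operatorname{dist}(x,E^{d-2})>\eta\}\subseteq\{x:|x_1|>\eta/2\}.
\end{equation*}

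Then apply Eq.~(\ref{eq.surfup}) with $\eps=\eta/2$ (assuming $\eta<2$; for larger $\eta$ the set is the whole sphere). Using the symmetry $x_1\mapsto -x_1$ for the lower tail gives
\begin{equation*}
1-\mu_3\left(\{\operatorname{dist}(x,E^{d-2})\le\eta\}\right)\le 2C\exp\left(-c(d-2)\eta^2/4\right).
\end{equation*}
The right-hand side tends to $0$ as $d\to\infty$ for each fixed $\eta$, which proves the limit.

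The only step needing care is Eq.~(\ref{eq.surfup}), which Lemma~\ref{surfeq} asserts without a detailed proof. If it must be justified, there are two routes. One is to integrate the marginal density proportional to $(1-t^2)^{(d-3)/2}$: bound it by $e^{-(d-3)t^2/2}$ and lower-bound the normalizer by roughly $1/\sqrt{d}$, exactly as in the proof of Lemma~\ref{voleq}. The other is to invoke Corollary~\ref{isoperim} with $A$ the hemisphere $\{x_1\le0\}$, which has measure $1/2$. Its geodesic $\delta$-neighborhood is precisely $\{x_1\le\sin\delta\}$, so the isoperimetric bound gives the tail directly.
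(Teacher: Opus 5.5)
Your proof is correct and follows the paper's route. The paper obtains Corollary~\ref{cor.mu3} directly from the tail bound Eq.~(\ref{eq.surfup}) of Lemma~\ref{surfeq} ("So, we can assert"), and you make explicit the steps it leaves implicit: the comparison $\operatorname{dist}(x,E^{d-2})\le\sqrt{2}\,|x_1|$ (or $\le\frac{\pi}{2}|x_1|$ geodesically), the reflection $x_1\mapsto -x_1$ for the lower tail, and the limit $d\to\infty$ at fixed $\eta$, while both of your optional justifications of Eq.~(\ref{eq.surfup}) are also sound.
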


\begin{prop} \label{prop.arcanti}
Let $E^{d-2}$ be a fixed hyperequator with uniform surface measure. Then, for any $C_3>0$ and for some fixed coordinate $x_2$ on $E^{d-2}$, the fraction of the measure above the plane $x_2=\frac{C_3}{\sqrt{d-3}}$ is bounded by $C\exp(-cC_3^2)$ for universal constants $C,c>0$.
\end{prop}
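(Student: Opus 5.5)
The idea is to repeat the argument of Lemma~\ref{surfeq} one dimension lower. The hyperequator $E^{d-2}=\partial\cl{S}^d\cap H_{d-1}^0$ is isometric to the unit sphere of the hyperplane $H_{d-1}^0\cong\bb{R}^{d-1}$, and its uniform surface measure is rotation invariant. So I fix a unit vector $e_2\in H_{d-1}^0$ orthogonal to the normal $e_1$ of $H_{d-1}^0$, and take $x_2=\langle x,e_2\rangle$ as the distinguished coordinate. The plane $x_2=0$ cuts $E^{d-2}$ in the antipodal great subsphere $\cl{A}^{d-3}=E^{d-2}\cap H_{d-2}^0$.

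The first step is the marginal density of $x_2$. For a uniformly distributed point on the unit sphere of $\bb{R}^{n}$, one coordinate has density proportional to $(1-t^2)^{(n-3)/2}$ on $[-1,1]$. This follows by slicing: the slice at height $t$ is a sphere of radius $\sqrt{1-t^2}$ and dimension $n-2$, and the arclength factor is $(1-t^2)^{-1/2}$. Taking $n=d-1$ gives density proportional to $(1-t^2)^{(d-4)/2}$ for $x_2$ on $E^{d-2}$.

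The second step bounds the tail $\mu\{x\in E^{d-2}:x_2\ge\eps\}$ using the same two estimates as in Lemma~\ref{voleq}.
\begin{itemize}
\item For the numerator, $1-t^2\le e^{-t^2}$ gives $\int_\eps^1(1-t^2)^{(d-4)/2}dt\le\int_\eps^\infty e^{-(d-4)t^2/2}dt$.
\item This is at most $\sqrt{\pi/(2(d-4))}\,e^{-(d-4)\eps^2/2}$, either by the Gaussian tail bound or by the $t/\eps$ trick.
\item For the normalizer, $\int_0^1(1-t^2)^{(d-4)/2}dt\ge\int_0^{1/\sqrt{d-4}}(1-t^2)^{(d-4)/2}dt\ge\frac{1}{2\sqrt{d-4}}$, using Bernoulli's inequality $(1-x)^a\ge1-ax$ exactly as in (\ref{eq.disklower}).
\end{itemize}
Combining the two gives $\mu\{x_2\ge\eps\}\le C\exp(-c(d-3)\eps^2)$ for universal constants. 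Replacing $d-4$ by $d-3$ in the exponent costs only a constant factor $e^{\eps^2/2}\le e^{1/2}$ for $\eps\le1$. Alternatively, the wrapped-integral computation of Corollary~\ref{isoperim} can be reused with $d$ replaced by $d-1$; its $\iota$-recursion bound $\sqrt{n}\,\iota_n\ge1$ handles the normalization uniformly in $d$.

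The third step substitutes $\eps=C_3/\sqrt{d-3}$, which yields the bound $C\exp(-cC_3^2)$. When $C_3\ge\sqrt{d-3}$ the set is empty or null, so the claim holds trivially.

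The main obstacle is bookkeeping, not analysis. I have to keep the constants uniform in $d$, in particular for small $d$ where $d-4\le0$, and check that the normalizer lower bound does not degrade. For small $d$ I would absorb everything into $C$, since the left side is always at most $1$. For large $d$ the Bernoulli estimate above is uniform in $d$.
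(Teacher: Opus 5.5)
Your proposal is correct and follows the paper's route. The paper's proof simply asserts that the coordinate $x_2$ on $E^{d-2}$ obeys the same one-coordinate tail bound as in Lemma~\ref{surfeq}, shifted down one dimension, giving $\mu\{x\in E^{d-2}:x_2\ge\eps\}\le C\exp(-c(d-3)\eps^2)$, and then substitutes $\eps=C_3/\sqrt{d-3}$. You supply the details the paper leaves implicit: the marginal density $(1-t^2)^{(d-4)/2}$, the Gaussian tail bound, the Bernoulli lower bound on the normalizer, the $d-4\to d-3$ adjustment, and the treatment of small $d$ and of $C_3\ge\sqrt{d-3}$. All of these check out.
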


\begin{proof}
Fix the north pole on $x_1$ axis at $x_1=1$, and choose another north pole $x_2=1$ on the hyperequator $E^{d-2}$
given by Lemma~\ref{surfeq}. Define
\begin{equation*}
\partial L_0:=E^{d-2}\cap H_{d-2}^0
=\{x||x|=1, x_1=x_2=0\}=\cl{A}^{d-3}.
\end{equation*}

For a point chosen uniformly from $E^{d-2}$, the coordinate $x_2$ has the same one-coordinate concentration form as in Lemma~\ref{surfeq}, with dimension parameter $d-3$. Therefore,
\begin{equation} \label{eq.arcup}
\mu\{x\in E^{d-2}:x_2\ge\eps\}\le C\exp\left(-c(d-3)\eps^2\right).
\end{equation}
Substituting $\eps=C_3/\sqrt{d-3}$ gives the asserted bound.
\end{proof}

Above Proposition~\ref{prop.arcanti} gives the following neighborhood formulation:

\begin{cor} \label{cor.mu4}
Let $E^{d-2}$ be a $d-2$-dimensional hyperequator and let $\mu_4$ be the uniform probability measure on $E^{d-2}$. For every fixed $\eta>0$,
\begin{equation}
\lim_{d\to\infty}\mu_4\left(\left\{x\in E^{d-2}:\operatorname{dist}(x,\cl{A}^{d-3})\le\eta\right\}\right)=1,
\end{equation}
where $\cl{A}^{d-3}=E^{d-2}\cap H_{d-2}^0$ is a codimension-one antipodal great subsphere of $E^{d-2}$.
\end{cor}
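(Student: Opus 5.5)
The statement to prove is Corollary~\ref{cor.mu4}: for fixed $\eta>0$, $\mu_4$-almost all of $E^{d-2}$ lies within distance $\eta$ of $\cl{A}^{d-3}=E^{d-2}\cap H_{d-2}^0$. I would deduce it from Proposition~\ref{prop.arcanti}, applied to both signs of the coordinate $x_2$, plus an elementary comparison between the distance to $\cl{A}^{d-3}$ and $|x_2|$.

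\textbf{Step 1: compare distances.} Take $x\in E^{d-2}$, so $|x|=1$ and $x_1=0$. If $x_2=\pm1$, the point $(0,0,1,0,\dots)$ of $\cl{A}^{d-3}$ is at distance $\sqrt2\le\sqrt2|x_2|$. Otherwise let $y$ be the point obtained from $x$ by setting $x_2$ to zero and renormalizing, $y=(x-x_2e_2)/\sqrt{1-x_2^2}$. Then $y\in\cl{A}^{d-3}$, and
\begin{equation*}
|x-y|^2=2-2\sqrt{1-x_2^2}\le 2x_2^2 .
\end{equation*}
Hence $\operatorname{dist}(x,\cl{A}^{d-3})\le\sqrt2\,|x_2|$ in all cases. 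It follows that
\begin{equation*}
\left\{x\in E^{d-2}:\operatorname{dist}(x,\cl{A}^{d-3})>\eta\right\}\subseteq\left\{x\in E^{d-2}:|x_2|>\eta/\sqrt2\right\}.
\end{equation*}

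\textbf{Step 2: bound both tails.} Apply Proposition~\ref{prop.arcanti} with $\eps=\eta/\sqrt2$, i.e.\ $C_3=\eta\sqrt{d-3}/\sqrt2$. This gives $\mu_4\{x_2\ge\eta/\sqrt2\}\le C\exp(-c(d-3)\eta^2/2)$. The reflection $x_2\mapsto -x_2$ is an isometry of $E^{d-2}$ that preserves $\mu_4$, so the lower tail $\{x_2\le-\eta/\sqrt2\}$ satisfies the same bound. Summing the two tails, the complement of the $\eta$-neighborhood has $\mu_4$-measure at most $2C\exp(-c(d-3)\eta^2/2)$.

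\textbf{Step 3: pass to the limit.} For fixed $\eta>0$ this bound tends to $0$ as $d\to\infty$, which gives the limit $1$ claimed in Corollary~\ref{cor.mu4}.

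The only step needing care is the reliance on Proposition~\ref{prop.arcanti}, whose proof only asserts the one-coordinate density form. If a self-contained argument is wanted, I would verify it directly. Under the uniform measure on $E^{d-2}\cong\partial\cl{S}^{d-1}$, the coordinate $x_2$ has density proportional to $(1-t^2)^{(d-4)/2}$ on $[-1,1]$. Using $(1-t^2)^{(d-4)/2}\le\exp(-(d-4)t^2/2)$ for the numerator, and bounding the normalizing integral below by a quantity of order $1/\sqrt{d}$ as in Eq.~(\ref{eq.disklower}), gives a Gaussian tail of the form $C\exp(-c(d-4)\eps^2)$. This suffices for the limit.
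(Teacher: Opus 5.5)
Your proposal is correct and follows the paper's route. The paper obtains Corollary~\ref{cor.mu4} directly from the one-coordinate tail bound of Proposition~\ref{prop.arcanti}; you additionally make explicit the comparison $\operatorname{dist}(x,\cl{A}^{d-3})\le\sqrt{2}\,|x_2|$ and the control of the lower tail $x_2\le-\eps$ via the reflection $x_2\mapsto-x_2$, both of which the paper leaves implicit. Your self-contained check of the density $(1-t^2)^{(d-4)/2}$ on $E^{d-2}\cong\partial\cl{S}^{d-1}$ is also right, and the difference between $d-3$ and $d-4$ is absorbed into the universal constants.
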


Let $\cl{S}^d$ be a $d$-dimensional hypersphere with unit radius. By Lemmas~\ref{areavol}, \ref{voleq} and \ref{surfeq}, and
Proposition~\ref{prop.arcanti}, Eq.~(\ref{eq.conj}) is supported in the neighborhood sense described above. This implies that the chosen localization scheme leads naturally to an antipodal great subsphere. By the Borsuk--Ulam theorem~\cite{LS}, the following corollary is satisfied.

\begin{cor}
Let $\cl{A}^{d-3}$ be an antipodal great subsphere, which is homeomorphic to $S^{d-3}$. Suppose that $g:\cl{A}^{d-3}\to\bb{R}^{d-3}$ is continuous. Then there exists a point $x\in\cl{A}^{d-3}$ such that $g(x)=g(-x)$.
\end{cor}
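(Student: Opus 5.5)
The plan is to reduce the statement directly to the classical Borsuk--Ulam theorem~\cite{LS}: for every $n\ge0$ and every continuous map $h:S^n\to\bb{R}^n$, there is a point $y\in S^n$ with $h(y)=h(-y)$. Here the relevant dimension is $n=d-3$. All that is needed is to transport this statement from the standard sphere $S^{d-3}$ to $\cl{A}^{d-3}$ in a way that respects antipodes.

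The first step is to record that $\cl{A}^{d-3}$ is antipodally closed. By the definition in the proof of Proposition~\ref{prop.arcanti}, $\cl{A}^{d-3}=\{x\in\bb{R}^d:|x|=1,\ x_1=x_2=0\}$. This is the unit sphere of the $(d-2)$-dimensional linear subspace $V=\{x_1=x_2=0\}$. Since $V$ is a linear subspace and the unit sphere is symmetric, $x\in\cl{A}^{d-3}$ implies $-x\in\cl{A}^{d-3}$. Hence the expression $g(-x)$ in the statement makes sense.

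The second step, which is the point needing care, is to choose the identification with $S^{d-3}$ so that it commutes with $x\mapsto -x$. An arbitrary homeomorphism would not do this, so I would use the linear isometry
\begin{equation*}
\Phi:V\to\bb{R}^{d-2},\qquad \Phi(0,0,x_3,\dots,x_d)=(x_3,\dots,x_d).
\end{equation*}
This map sends $\cl{A}^{d-3}$ onto the standard unit sphere $S^{d-3}\subset\bb{R}^{d-2}$. Because $\Phi$ is linear, it satisfies $\Phi(-x)=-\Phi(x)$.

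The third step applies the classical theorem. I would define $h:=g\circ\Phi^{-1}:S^{d-3}\to\bb{R}^{d-3}$, which is continuous as a composition of continuous maps. Borsuk--Ulam then gives $y\in S^{d-3}$ with $h(y)=h(-y)$. Setting $x=\Phi^{-1}(y)\in\cl{A}^{d-3}$ and using $\Phi^{-1}(-y)=-x$, we obtain
\begin{equation*}
g(x)=h(y)=h(-y)=g(-x),
\end{equation*}
as required.

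Finally, the degenerate case $d=3$ is consistent with this argument. There $\cl{A}^{0}$ is a pair of antipodal points and the target space is $\bb{R}^0$, so every $x\in\cl{A}^0$ satisfies $g(x)=g(-x)$ trivially. For $d<3$ the set $\cl{A}^{d-3}$ is empty or the statement is vacuous, so we tacitly assume $d\ge3$.
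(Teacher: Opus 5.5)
Your proposal is correct and takes the same route as the paper, which simply invokes the Borsuk--Ulam theorem for maps $S^{d-3}\to\bb{R}^{d-3}$. Your use of the linear isometry $\Phi$ to make the identification commute with $x\mapsto -x$ supplies a detail the paper leaves implicit, and it is needed because a bare homeomorphism to $S^{d-3}$ would not suffice.
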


\section{Conclusion} \label{sec.conclusion}
We have revisited L\'{e}vy's lemma from the viewpoint of successive geometric localization on high-dimensional spheres.  The usual concentration statement says that Lipschitz functions on the sphere are almost constant with overwhelming probability, and it is commonly interpreted as concentration of surface measure near a median hyperequator.  In this work we separated the geometric steps behind this picture in neighborhood language: the volume of a high-dimensional ball is localized near its boundary, the boundary measure is localized near an arbitrary hyperequator, and the induced measure on the hyperequator is localized near a codimension-two antipodal great subsphere.

The main technical contribution is the hyper-equatorial concentration estimate in Theorem \ref{antipodal'}, which gives a L\'{e}vy-type bound for Lipschitz functions on $E^{d-2}=\partial\cl{E}^{d-1}$ with an exponential dependence on $d-1$.  The proof follows the classical architecture of L\'{e}vy's lemma: one derivation uses the Brunn--Minkowski inequality and Euclidean neighborhoods, while the other uses the corresponding isoperimetric formulation.  This parallel proof structure shows that the estimate is a dimension-shifted instance of the same high-dimensional spherical geometry that underlies the standard lemma.

The geometric estimates in Section \ref{sec.CM} support the conjectural localization scheme as
\begin{equation*}
\cl{S}^{d}\longrightarrow \partial\cl{S}^{d}\longrightarrow E^{d-2}\longrightarrow \cl{A}^{d-3}.
\end{equation*}
In this sense, the antipodal great subsphere appears as a natural terminal object for the coordinate localization procedure considered here.  This is also consistent with the topological role of antipodal pairs expressed by the Borsuk--Ulam theorem.  Since L\'{e}vy-type concentration estimates are widely used in quantum information theory, especially in the study of random pure states, random unitaries, typical entanglement, unitary designs, and quantum learning, a more precise understanding of the constant-level and geometric structure of these inequalities may be useful beyond the purely asymptotic regime.

Several questions remain open.  First, the constants in the exponent should be optimized and compared carefully with the sharp spherical isoperimetric constants.  Second, the limiting statement in Conjecture \ref{conj} should be strengthened in a fully measure-theoretic language, for instance in terms of concentration functions or metric-measure convergence.  Third, it would be useful to identify whether the antipodal localization described here gives quantitative improvements in concrete quantum-information applications where L\'{e}vy's lemma is currently used only up to universal constants. These questions indicate that the antipodal viewpoint may provide a productive refinement of the standard concentration-of-measure toolbox.

\section*{Acknowledgements}
This work was supported by the National Research Foundation of Korea (NRF) through a grant funded by the Ministry of Science and ICT (Grant No. RS-2025-00515537), the Institute for Information \& Communications Technology Promotion (IITP) grant funded by the Korean government (MSIP) (Grants Nos. RS-2025-02304540 and RS-2019-II190003), and the Korea Institute of Science and Technology Information (Grant No.P26028). 

\section*{Author Contributions}
K.J. conceived, suggested, wrote, and also reviewed the manuscript.

\section*{Data Availability}
No datasets were generated or analysed during the current study.

\section*{Conflict of Interest}
The author declares no conflict of interest.


%

\end{document}